\newtheorem{theorem}{Theorem}
\newtheorem{lemma}{Lemma}
\newtheorem{proof}{Proof}
\newtheorem{remark}{Remark}
\title{On the analysis of partially homogeneous nearest-neighbour random walks in the quarter plane}
\author{Ioannis Dimitriou \footnote{E-mail: idimit@math.upatras.gr\\Website: \href{https://thalis.math.upatras.gr/~idimit/}{https://thalis.math.upatras.gr/~idimit/}}}
\affil{\small Department of Mathematics, 
University of Patras, P.O.~Box 
26500, Patras, Greece.}
\begin{document}
\maketitle
\begin{abstract} 
This work deals with the stationary analysis of two-dimensional partially homogeneous nearest-neighbour random walks. Such type of random walks in the quarter plane are characterized by the fact that the one-step transition probabilities are functions of the state-space. 
We show that its stationary behavior is investigated by solving a finite system of linear equations, and a functional equation with the aid of the theory of Riemann(-Hilbert) boundary value problems. This work is strongly motivated by emerging applications in multiple access systems as well as in the study of a general class of queueing systems with state dependent parameters. A simple numerical illustration providing useful information about a queue-aware multiple access system is also presented. \vspace{2mm}\\
{\textbf{Keywords:}} State-dependency, Nearest-neighbour random walk, Stationary distribution, Boundary value problem.
\end{abstract}
\section{Introduction}
In this work we focus on the stationary analysis of irreducible discrete time Markov chains in the quarter plane $\mathbb{Z}_{+}^{2}$ (where $\mathbb{Z}_{+}$ refers to the set of non-negative integers), whose one-step transition probabilities possess a partial homogeneity property. More precisely, we focus on nearest-neighbour two-dimensional random walk with one-step transition probabilities defined as follows: transitions from an interior point $(n_{1},n_{2})\in\{1,2,\ldots\}\times\{1,2,\ldots\}$ of the state space lead with probability $p_{i,j}(n_{1},n_{2})$ to a neighbouring point $(n_{1}+i,n_{2}+j)$, where $(i,j)\in\{-1,0,1\}\times\{-1,0,1\}$. 

Such a class of state-dependent two-dimensional random walks are instrumental in the analytical investigation of a large class of queueing networks with \textit{interacting queues}, where interaction means that system parameters are functions of the state of the network. However, the stationary analysis of general state-dependent two-dimensional random walks is still an open problem. 

In this paper, we focus on the partial-homogeneous nearest neighbour random walks in the quarter plane (PH-NNRWQP; see Figure \ref{fig10}), which obey the following property: The state space $S=\{(n_{1},n_{2});n_{1},n_{2}\geq0\}$ is split in four non-intersecting subsets, i.e., $S=S_{0}\cup S_{1}\cup S_{2}\cup S_{3}$, where:
\begin{equation}
\begin{array}{rl}
S_{0}=\{(n_{1},n_{2});n_{1}<N_{1},n_{2}<N_{2}\},&S_{1}=\{(n_{1},n_{2});n_{1}\geq N_{1},n_{2}<N_{2}\},\\
S_{2}=\{(n_{1},n_{2});n_{1}<N_{1},n_{2}\geq N_{2}\},&S_{3}=\{(n_{1},n_{2});n_{1}\geq N_{1},n_{2}\geq N_{2}\},
\end{array}\label{split}
\end{equation}
such that for $i,j=0,\pm 1$
\begin{equation}
\begin{array}{rl}
p_{i,j}(n_{1},n_{2})=&\left\{\begin{array}{ll}
p_{i,j}(N_{1},n_{2}),&(n_{1},n_{2})\in S_{1},\\
p_{i,j}(n_{1},N_{2}),&(n_{1},n_{2})\in S_{2},\\
p_{i,j}(N_{1},N_{2}):=p_{i,j},&(n_{1},n_{2})\in S_{3}.
\end{array}\right.
\end{array}\label{trans}
\end{equation}
Our aim in this work is to provide an analytical approach to investigate its stationary behavior, and state the importance of the models described by such class of RWQP in the modelling (among others) of emerging engineering applications in multiple access systems.
\begin{figure}[htp]
\centering
\includegraphics[scale=0.5]{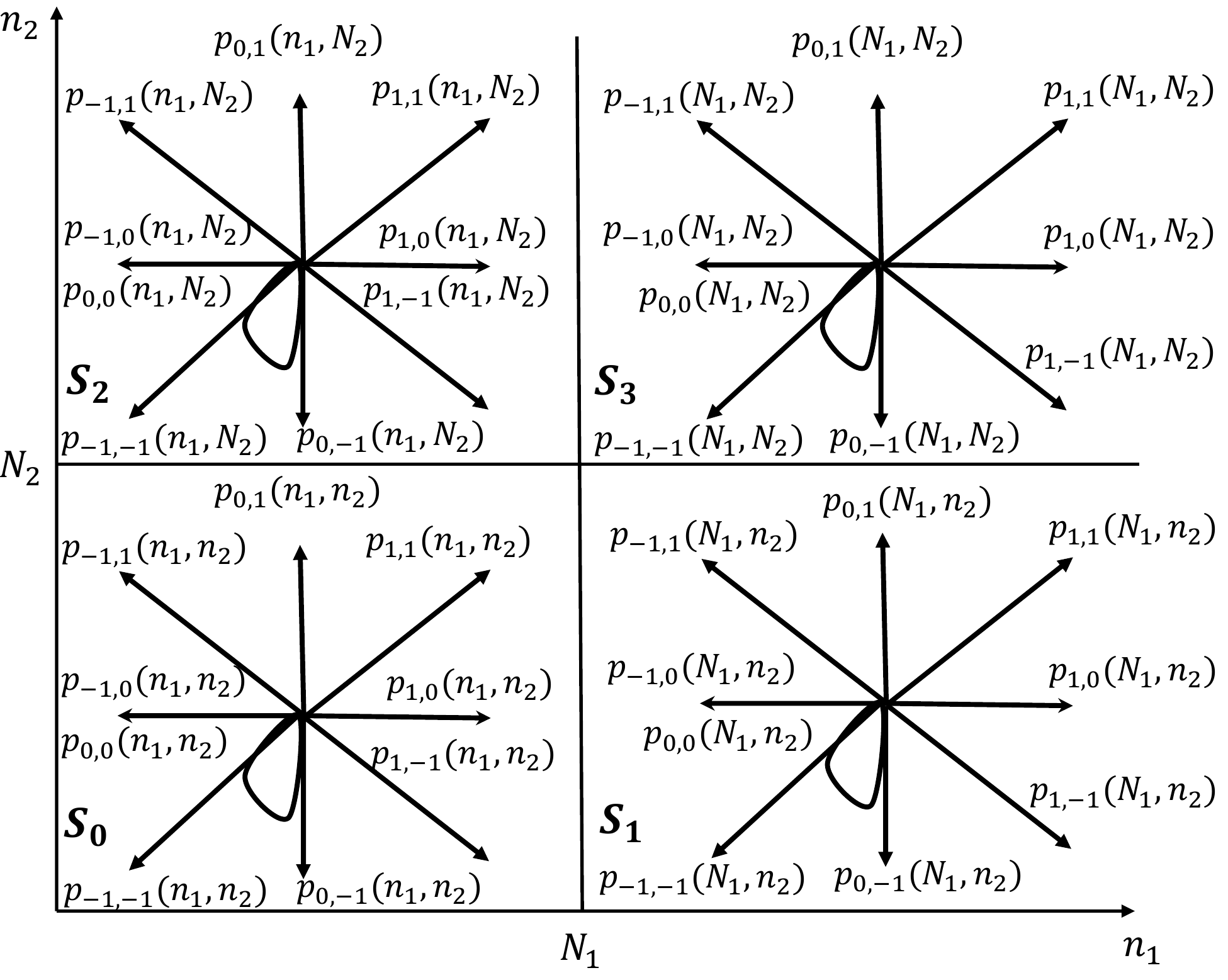}
\caption{The PH-NNRWQP.}
\label{fig10}
\end{figure}
\subsection{Related work}
Since the pioneered works \cite{fay2,maly3}, RWQP has been extensively studied as an important topic of applied probability with links, among others in queueing theory, e.g., \cite{cof,fayolleG,fay2,fay1,coh2,coh1,flat,fphd,iphd}, in finance \cite{cont}, in combinatorics, e.g., \cite{Kurkova2015,BM1,KR1,FR,adRL}. Substantial work has also done in obtaining exact tail asymptotics, see e.g., \cite{miya,guileu,flatha,lizha1,lizha2,lizha3,oza,kurksuh,malyas} (not exhaustive list).

The main body of the related literature is devoted to the analysis of \textit{semi-homogeneous} NNRWQP. With the term \textit{semi-homogeneous}, we mean that the transition probabilities are state-independent in so far it concerns states belonging to the interior, i.e., $\{(n_{1},n_{2}),n_{1},n_{2}>0\}$, similarly for those of $\{(n_{1},n_{2}),n_{1}>0,n_{2}=0\}$, of $\{(n_{1},n_{2}),n_{1}=0,n_{2}>0\}$ and of $\{(0,0)\}$. Most of the research refers to the investigation of ergodicity conditions; e.g., \cite{rw,fay}. The derivation of the stationary performance metrics reveals is not an easy task and was performed with the aid of the theory of boundary value problems \cite{fay1,coh1}. 

Explicit conditions for recurrence and transience were given in the seminal works in \cite{maly1,malmen,men}. For a detailed treatment see the seminal book in \cite{fay}. We also refer \cite{rose} that partly extended Malyshev's work. Ergodicity conditions for the \textit{partially homogeneous} case (see Fig. \ref{fig10}) described above was considered in \cite{malmen} under the assumption that the jumps of the RWQP are bounded. It was later considered in \cite{fayo} under a weaker restriction that the jumps of the RWQP have bounded second moments; see also \cite{fay}. A profound study concerning necessary and sufficient conditions for ergodicity of general RWQP that are continuous to the \textit{West}, to the \textit{South-West} and to \textit{South} are given in \cite[Part II]{rw}. For a detailed methodological treatment of the stationary analysis of semi-homogeneous RWQP the reader is referred to the seminal books in \cite{fay1,coh1}. In \cite{fay1}, the analysis is concentrated mainly to nearest neighbour RWQPs, while in \cite{coh1} the authors provided a systematic study of RWQP that are continuous to the \textit{West}, to the \textit{South-West} and to \textit{South}. 

Among the class of NNRWQP, a very effective analytical approach can be applied when transitions to the \textit{North}, to the \textit{North-East} and to the \textit{East} are not allowed. In particular, it is shown that the bivariate generating function of the stationary joint distribution of the random walk can be explicitly expressed in terms of meromorphic functions \cite{c1,c2,c3,Adan2016}. 

The analysis becomes quite harder when space homogeneity property collapses. Such a situation arises in the two-dimensional join the shortest queue problem, in which the quarter plane is separated into two homogeneous regions \cite{fphd,iphd,coh1}. In these studies, the analysis is reduced to the simultaneous solution of two boundary value problems. In \cite{king}, the author considered the symmetric shortest queue problem and provided an analytic method to obtained its stationary distribution. A very efficient method to obtain the stationary joint queue length distribution in the shortest queue problem was developed in \cite{Adan2016,ad1,ad0,ad2,ad3} (the compensation approach), by appropriately transformed the original RWQP to a random walk with no transitions to the East, to the North-East and to North. Such an approach provides an explicit characterization of the equilibrium probabilities as an infinite or finite series of product forms.

In \cite{fayo}, the authors considered a two dimensional birth-death process (i.e., transitions were allowed only to the North, East, West and South) with partial homogeneity that separate the state space in four distinct regions. They showed that its stationary behavior is investigated by solving a boundary value problem and a system of linear equations. Stability conditions for such type of random walks was investigated in \cite{za,faysta}. 
\subsection{Our contribution}
\paragraph{Fundamental contribution}
In this work we present an analytical method for analyzing the stationary behavior of a partially homogeneous nearest-neighbour random walk in the quarter plane whose one step transition probabilities for $(n_{1},n_{2})\in S=S_{0}\cup S_{1}\cup S_{2}\cup S_{3}$, and $i,j=0,\pm 1$ are as in (\ref{trans}).
\begin{itemize}
\item We show that the determination of the steady-state distribution of a PH-NNRWQP can be reduced to the solution of a finite system of linear equations, as well as to the solution of a non-homogeneous Riemann boundary value problem. \item For the special case of a PH-NNRWQP with no transitions to South-West, we present a slightly different approach by using the theory of Riemann-Hilbert boundary value problems.
\end{itemize}

\paragraph{Applications} This general class of PH-NNRWQP serves as a general modeling framework for several engineering applications. In particular, it can be used to model interacting queues in multiple access networks.

Consider a network consisting of two users communicating in random access manner with a common destination node; see Fig. \ref{fig1w}. The time is assumed to be slotted, and each user has external arrivals that depend on the state of the network at the beginning of a slot. Arriving packets are stored in their infinite capacity queues. At the beginning of a slot, users accessing the wireless channel randomly by adapting their transmission probabilities based on the status of the network. 

Therefore, each user node adapts its transmission parameters according to its own state as well as the state of the other node. In such a case, we also take into account both the wireless interference, and the complex interdependence among users' nodes due to the shared medium \cite{abi,borst,dim1,dimpaptwc,adhoc}. Clearly, such a protocol leads to substantial performance gains, since allow to dynamically design random access networks, or equivalently allow to investigate \textit{intelligent} and \textit{self-aware} multiple access systems. 

Note that in such a shared access network, it is practical to assume a minimum exchanging information of one bit between the nodes, which allows them to be aware of the state of each other. We assume that the time for the exchange of information is very small with respect to slot duration, and thus considered negligible. Moreover, the knowledge of the state of the network by a node, does not mean that a node will remain silent if the other node is active (i.e., there are buffered packets in its queue). In particular, by allowing the nodes parameters (i.e., packet generation probability, packet transmission probability) to depend on the state of the network, we provide the additional flexibility towards self-aware and dynamically adapted networks. To the best of our knowledge this variation of random access has not been reported in the literature.

Note also that in \cite{gu,shn1,shn2,stol2}, dynamic, queue-length based strategies were introduced in order to investigate the stability of queue-aware multiple access networks. In these works, the actual queue lengths of the flows in each node's close
neighbourhood are used to determine the nodes' channel access
probabilities. However, they did to focus on the stationary behaviour. We also refer to \cite{fay,gha,borst,mal1,za,shn3}, in which stability condition for Markov chains both in two and higher dimensions, whose transition structure possess a property of spatial homogeneity was investigated. 



The paper is organized as follows. In Section \ref{mod} we present in detail the mathematical model, while in Section \ref{pre} we provide a detailed analysis about how to investigate the stationary behavior of a general PH-NNRWQP, by solving a finite system of linear equations and a functional equations in terms of a solution of a non-homogeneous Riemann boundary value problem. In Section \ref{fayo}, we focus on the case where transitions are not allowed in South-West, and show how this special case of PH-NNRWQP is analyzed with the aid of the theory of Riemann-Hilbert boundary value problems. An application on the modelling of adaptive multiple access systems is given in Section \ref{appl}, while a simple numerical illustration is given in Section \ref{num}.
\section{Model description}\label{mod}
We consider a partially homogeneous two-dimensional stochastic processes $\mathbf{Q}_{n}=\{(Q_{1,n},Q_{2,n});n=0,1,\ldots\}$, with state space $S=\mathbb{N}_{0}\times\mathbb{N}_{0}=\{0,1,\ldots\}\times\{0,1,\ldots\}$. For the complete description of the structure of $\mathbf{Q}_{n}$ we need the following assumptions:
\begin{itemize}
\item There exist two positive constants, say $N_{1}$, $N_{2}$, such that $S$ is written as $S=S_{0}\cup S_{1}\cup S_{2}\cup S_{3}$, where the non-intersecting sets $S_{j}$, $j=0,1,2,3$ are given in (\ref{split}).
\item For $j=0,1,2,3$, denote the sequence of independent stochastic vectors $$\{(\xi_{1n}^{(j)}(Q_{1,n},Q_{2,n}),\xi_{2n}^{(j)}(Q_{1,n},Q_{2,n})),n=0,1,\ldots\},$$ with range space $\{-1,0,1\}\times\{-1,0,1\}$. The distribution of the stochastic vectors depend on the state of $\mathbf{Q}_{n}$ according to the state-space splitting as shown in Figure \ref{fig10}.
\item The family $\{(\xi_{1n}^{(3)}(Q_{1,n},Q_{2,n}),\xi_{2n}^{(3)}(Q_{1,n},Q_{2,n})),n=0,1,\ldots\}$ is a sequence of i.i.d. stochastic vectors. Moreover, $(\xi_{1n}^{(3)}(Q_{1,n},Q_{2,n}),\xi_{2n}^{(3)}(Q_{1,n},Q_{2,n}))\sim(\xi_{1n}^{(3)},\xi_{2n}^{(3)})$, for $(Q_{1,n},Q_{2,n})\in S_{3}$. 
\item The four families $\{(\xi_{1n}^{(j)}(Q_{1,n},Q_{2,n}),\xi_{2n}^{(j)}(Q_{1,n},Q_{2,n}))\}$, $j=0,1,2,3$ are independent and $$(\xi_{1n}^{(j)}(Q_{1,n},Q_{2,n}),\xi_{2n}^{(j)}(Q_{1,n},Q_{2,n}))\sim(\xi_{1}^{(j)}(Q_{1},Q_{2}),\xi_{2}^{(j)}(Q_{1},Q_{2})).$$
\end{itemize}
Then, for $n=0,1,\ldots$, $k=1,2,$ and $\mathbf{Q}_{0}=(0,0)$
\begin{displaymath}
Q_{k,n+1}=[Q_{k,n}+\xi_{n}^{(j)}(Q_{1,n},Q_{2,n})]^{+},
\end{displaymath}
where $[a]^{+}=max(0,a)$.

The model at hand is described by a two-dimensional Markov chain with limited state dependency, or equivalently with partial spatial homogeneity. Conditions for ergodicity for such random walks in the positive quadrant has been investigated in \cite[Theorem 3.1, p. 178]{fay2}, \cite[Theorem 4]{za}\footnote{In Section \ref{appl} we provide the stability conditions for a certain application of a PH-NNRWQP in adaptive ALOHA-type random access networks.}.
\section{Analysis}\label{pre}
Assume hereon that the system is stable, and let the equilibrium probabilities
\begin{displaymath}
\pi(n_{1},n_{2})=\lim_{m\to\infty}P(Q_{1,m}=n_{1},Q_{2,m}=n_{2}).
\end{displaymath}
Then, for $(n_{1},n_{2})\in S$ the equilibrium equations reads
\begin{equation}
\begin{array}{l}
\pi(n_{1},n_{2})=\pi(n_{1},n_{2})p_{0,0}(n_{1},n_{2})+\pi(n_{1},n_{2}+1)p_{0,-1}(n_{1},n_{2}+1)\\
+\pi(n_{1}+1,n_{2})p_{-1,0}(n_{1}+1,n_{2})+\pi(n_{1}-1,n_{2}+1)p_{1,-1}(n_{1}-1,n_{2}+1)\mathbf{1}_{\{n_{1}\geq1\}}\\
+\pi(n_{1}+1,n_{2}-1)p_{-1,1}(n_{1}+1,n_{2}-1)\mathbf{1}_{\{n_{2}\geq1\}}+\pi(n_{1},n_{2}-1)p_{0,1}(n_{1},n_{2}-1)\mathbf{1}_{\{n_{2}\geq1\}}\\
+\pi(n_{1}-1,n_{2}-1)p_{1,1}(n_{1}-1,n_{2}-1)\mathbf{1}_{\{n_{1},n_{2}\geq1\}}\\+\pi(n_{1}+1,n_{2}+1)p_{-1,-1}(n_{1}+1,n_{2}+1),
\end{array}\label{e1}
\end{equation}
where $\sum_{n_{1}=0}^{\infty}\sum_{n_{2}=0}^{\infty}\pi(n_{1},n_{2})=1$, $\pi(n_{1},-1)=0=\pi(-1,n_{2})$ and $\mathbf{1}_{\{A\}}$ the indicator function of the event $A$. 
\subsection{Generating functions and the functional equation}
To proceed, we focus on the equilibrium equations (\ref{e1}) at each sub-region of the state space separately.
\begin{enumerate}
\item \textbf{Region $S_{0}$:} Consider first the equilibrium equations (\ref{e1}) corresponding to the region $S_{0}$. There are $N_{1}\times N_{2}$ equations ($n_{1}=0,1,...,N_{1}-1$, $n_{2}=0,1,...,N_{2}-1$) involving $(N_{1}+1)\times(N_{2}+1)$ unknown probabilities ($\pi(n_{1},n_{2})$, $n_{1}=0,1,...,N_{1}$, $n_{2}=0,1,...,N_{2}$). This leaves $N_{1}+N_{2}+1$ unknowns.
\item \textbf{Region $S_{1}$:} In the following, we focus on the equations associated with $S_{1}$ ($n_{1}=N_{1},N_{1}+1,...$, $n_{2}=0,1,...N_{2}-1$). Let,
\begin{displaymath}
g_{n_{2}}(x)=\sum_{n_{1}=N_{1}}^{\infty}\pi_{n_{1},n_{2}}x^{n_{1}-N_{1}},\,n_{2}=0,1,....
\end{displaymath}
Having in mind that $p_{i,j}(n_{1},n_{2})=p_{i,j}(N_{1},n_{2})$ for $n_{1}\geq N_{1}$, we obtain from (\ref{e1}) the following relations,
\begin{equation}
\begin{cases}\begin{array}{l}
f_{2}(N_{1},0,x)g_{0}(x)-f_{3}(N_{1},1,x)g_{1}(x)=b_{0}(x),\vspace{2mm}\\
-f_{1}(N_{1},n_{2}-1,x)g_{n_{2}-1}(x)+f_{2}(N_{1},n_{2},x)g_{n_{2}}(x)\\-f_{3}(N_{1},n_{2}+1,x)g_{n_{2}+1}(x)=b_{n_{2}}(x),\,n_{2}=1,2,...,
\end{array}\end{cases}
\label{r1}
\end{equation} 
where, for $n_{2}=0,1,2,...$,
\begin{displaymath}
\begin{array}{rl}
f_{1}(N_{1},n_{2},x)=&x^{2}p_{1,1}(N_{1},n_{2})+xp_{0,1}(N_{1},n_{2})+p_{-1,1}(N_{1},n_{2}),\vspace{2mm}\\
f_{2}(N_{1},n_{2},x)=&x[1-p_{0,0}(N_{1},n_{2})]-x^{2}p_{1,0}(N_{1},n_{2})-p_{-1,0}(N_{1},n_{2}),\vspace{2mm}\\
f_{3}(N_{1},n_{2},x)=&p_{-1,-1}(N_{1},n_{2})+xp_{0,-1}(N_{1},n_{2})+x^{2}p_{1,-1}(N_{1},n_{2}),\end{array}
\end{displaymath}
\begin{displaymath}
\begin{array}{l}
b_{n_{2}}(x)=x[\pi(N_{1}-1,n_{2}-1)p_{1,1}(N_{1}-1,n_{2}-1)\\+\pi(N_{1}-1,n_{2}+1)p_{1,-1}(N_{1}-1,n_{2}+1)
+\pi(N_{1}-1,n_{2})p_{1,0}(N_{1}-1,n_{2})]\vspace{2mm}\\-[\pi(N_{1},n_{2}-1)p_{-1,1}(N_{1},n_{2}-1)+\pi(N_{1},n_{2})p_{-1,0}(N_{1},n_{2})\vspace{2mm}\\+\pi(N_{1},n_{2}+1)p_{-1,-1}(N_{1},n_{2}+1)].
\end{array}
\end{displaymath}

Relations (\ref{r1}) allow to express $g_{n_{2}}(x)$, $n_{2}=1,2,...$, in terms of $g_{0}(x)$ and $b_{0}(x),...,b_{n_{2}-1}(x)$. Indeed, starting from the first in (\ref{r1}) and solving recursively, we conclude that,
\begin{equation}
\begin{array}{c}
g_{n_{2}}(x)=e_{n_{2}}(x)g_{0}(x)+t_{n_{2}}(x),\,n_{2}=1,2,...,
\end{array}\label{r2}
\end{equation}
where, for $n_{2}=1,2,...,$
\begin{displaymath}
\begin{array}{rl}
e_{n_{2}}(x)=&\frac{f_{2}(N_{1},n_{2}-1,x)e_{n_{2}-1}(x)-f_{1}(N_{1},n_{2}-2,x)e_{n_{2}-2}(x)}{f_{3}(N_{1},n_{2},x)},\vspace{2mm}\\
t_{n_{2}}(x)=&\frac{f_{2}(N_{1},n_{2}-1,x)t_{n_{2}-1}(x)-f_{1}(N_{1},n_{2}-2,x)t_{n_{2}-2}(x)-b_{n_{2}-1}(x)}{f_{3}(N_{1},n_{2},x)},
\end{array}
\end{displaymath}
where $e_{-1}(x)=0=t_{-1}(x)=t_{0}(x)$ and $e_{0}(x)=1$. Note that up to, and including $n_{2}=N_{2}$, no other new probabilities appear, except those introduced in the equations for the region $S_{0}$, i.e., $\pi(n_{1},n_{2})$, $n_{1}=0,1,...,N_{1}$, $n_{2}=0,1,...,N_{2}$. 

For $n_{2}=1,\ldots,N_{2}$, if 
\begin{displaymath}
\begin{array}{rl}
\textbf{l}(x):=&(g_{1}(x),\ldots,g_{N_{2}}(x))^{\prime},\\\textbf{b}(x):=&(b_{0}(x),\ldots,b_{N_{2}-1}(x))^{\prime},\\\textbf{c}_{1}(x):=&(-f_{2}(N_{1},0,x),f_{1}(N_{1},0,x),0,\ldots,0)^{\prime},
\end{array}
\end{displaymath}
the system (\ref{r1}) is written in matrix form as
\begin{equation}
\textbf{K}(x)\textbf{l}(x)=\textbf{c}_{1}(x)g_{0}(x)+\textbf{b}(x),
\label{sys1}
\end{equation}
where $\textbf{K}(x):=(k_{i,j}(x))$ is a $N_{2}\times N_{2}$ matrix with elements
\begin{displaymath}
\begin{array}{rl}
k_{i,j}(x)=&\left\{\begin{array}{ll}
-f_{3}(N_{1},i,x),&i=j,\\
f_{2}(N_{1},i-1,x),&i=j+1,\\
-f_{1}(N_{1},j,x),&i=j+2,\\
\end{array}\right.
\end{array}
\end{displaymath}

Note that system (\ref{sys1}) is non-singular since $\textbf{K}_{1}(x)$ is lower triangular, having determinant equal to $det(\textbf{K}(x))=(-1)^{N_{2}}\prod_{n_{2}=1}^{N_{2}}f_{3}(N_{1},n_{2},x)$.
\item \textbf{Region $S_{2}$:} Clearly, region $S_{2}$ is a mirror image of $S_{1}$, where index 1 becomes 2 and component $n_{1}$ becomes $n_{2}$. Similarly, denote,
\begin{displaymath}
h_{n_{1}}(y)=\sum_{n_{2}=N_{2}}^{\infty}\pi_{n_{1},n_{2}}y^{n_{2}-N_{2}},\,n_{1}=0,1,....
\end{displaymath}
By repeating the procedure, 
\begin{equation}
\begin{cases}\begin{array}{l}
\tilde{f}_{2}(0,N_{2},y)h_{0}(y)-\tilde{f}_{3}(1,N_{2},y)h_{1}(y)=u_{0}(y),\vspace{2mm}\\
-\tilde{f}_{1}(n_{1}-1,N_{2},y)h_{n_{1}-1}(y)+\tilde{f}_{2}(n_{1},N_{2}y)h_{n_{1}}(y)\\-\tilde{f}_{3}(n_{1}+1,N_{2}y)h_{n_{1}+1}(y)=u_{n_{2}}(y),\,n_{1}=1,2,...,
\end{array}\end{cases}
\label{r11}
\end{equation} 
where, for $n_{1}=0,1,2,...$,
\begin{displaymath}
\begin{array}{rl}
\tilde{f}_{1}(n_{1},N_{2},y)=&y^{2}p_{1,1}(n_{1},N_{2})+yp_{1,0}(n_{1},N_{2})+p_{1,-1}(n_{1},N_{2}),\vspace{2mm}\\
\tilde{f}_{2}(n_{1},N_{2},y)=&y[1-p_{0,0}(n_{1},N_{2})]-y^{2}p_{0,1}(n_{1},N_{2})-p_{0,-1}(n_{1},N_{2}),\vspace{2mm}\\
\tilde{f}_{3}(n_{1},N_{2},y)=&y^{2}p_{-1,1}(n_{1},N_{2})+yp_{-1,0}(n_{1},N_{2})+p_{-1,-1}(n_{1},N_{2}),
\end{array}
\end{displaymath}
\begin{displaymath}
\begin{array}{l}
u_{n_{1}}(y)=y[\pi(n_{1}-1,N_{2}-1)p_{1,1}(n_{1}-1,N_{2}-1)\\+\pi(n_{1}+1,N_{2}-1)p_{-1,1}(n_{1}+1,N_{2}-1)
+\pi(n_{1},N_{2}-1)p_{0,1}(n_{1},N_{2}-1)]\vspace{2mm}\\-[\pi(n_{1}-1,N_{2})p_{1,-1}(n_{1}-1,N_{2})+\pi(n_{1},N_{2})p_{0,-1}(n_{1},N_{2})\vspace{2mm}\\+\pi(n_{1}+1,N_{2})p_{-1,-1}(n_{1}+1,N_{2})].
\end{array}
\end{displaymath}
Then, 
\begin{equation}
h_{n_{1}}(y)=\tilde{e}_{n_{1}}(y)h_{0}(y)+\tilde{t}_{n_{1}}(y),\,n_{1}=1,2,...,
\label{r3}
\end{equation}
where $\tilde{e}_{n_{1}}(y)$, are known polynomials and $\tilde{t}_{n_{1}}(y)$ contain unknown probabilities, but now new terms except those introduced in the equations for $S_{0}$, i.e., $\pi(n_{1},n_{2})$, $n_{1}=0,1,...,N_{1}$, $n_{2}=0,1,...,N_{2}$. Similarly, (\ref{r11}) is written for $n_{1}=1,\ldots,N_{1}$ as 
\begin{equation}
\textbf{M}(y)\textbf{j}(y)=\textbf{c}_{2}(y)h_{0}(x)+\textbf{u}(y),
\label{sys2}
\end{equation}
where $\textbf{M}(y):=(m_{i,j}(y))$ is a $N_{1}\times N_{1}$ matrix with elements
\begin{displaymath}
\begin{array}{rl}
m_{i,j}(x)=&\left\{\begin{array}{ll}
-\tilde{f}_{3}(i,N_{2},y),&i=j,\\
\tilde{f}_{2}(j-1,N_{2},y),&j=i+1,\\
-\tilde{f}_{1}(j,N_{2},x),&j=i+2,\\
\end{array}\right.
\end{array}
\end{displaymath}
and 
\begin{displaymath}
\begin{array}{rl}
\textbf{j}(y):=&(h_{1}(y),\ldots,h_{N_{1}}(y))^{\prime},\\\textbf{u}(y):=&(u_{0}(y),\ldots,u_{N_{1}-1}(y))^{\prime},\\\textbf{c}_{2}(y):=&(-\tilde{f}_{2}(0,N_{2},y),\tilde{f}_{1}(0,N_{2},y),0,\ldots,0)^{\prime}.
\end{array}
\end{displaymath}
\item \textbf{Region $S_{3}$:} We now focus on the region $S_{3}$ and denote,
\begin{displaymath}
\begin{array}{rl}
g(x,y)=&\sum_{n_{1}=N_{2}}^{\infty}\sum_{n_{2}=N_{2}}^{\infty}\pi_{n_{1},n_{2}}x^{n_{1}-N_{1}}y^{n_{2}-N_{2}}\vspace{2mm}\\
=&\sum_{n_{2}=N_{2}}^{\infty}g_{n_{2}}(x)y^{n_{2}-N_{2}}=\sum_{n_{1}=N_{1}}^{\infty}h_{n_{1}}(y)x^{n_{1}-N_{1}}.
\end{array}
\end{displaymath}
Using (\ref{r1}), noting that $f_{i}(N_{1},n_{2},x):=f_{i}(N_{1},N_{2},x)$ for $n_{2}\geq N_{2}$, and having in mind (\ref{r2}), (\ref{r3}), we finally obtain after lengthy calculations,
\begin{equation}
R(x,y)g(x,y)=A(x,y)g_{0}(x)+B(x,y)h_{0}(y)+C(x,y),
\label{fun}
\end{equation}
where, 
\begin{equation}
R(x,y)=xy-\Psi(x,y),\label{kern}
\end{equation}
and,
\begin{displaymath}
\begin{array}{rl}
\Psi(x,y)=&xyp_{0,0}+x^{2}yp_{1,0}+yp_{-1,0}+x^{2}y^{2}p_{1,1}+xy^{2}p_{0,1}\\&+y^{2}p_{-1,1}+xp_{0,-1}+x^{2}p_{1,-1}+p_{-1,-1},\\
A(x,y)=&yf_{1}(N_{1},N_{2}-1,x)e_{N_{2}-1}(x)-f_{3}(N_{1},N_{2},x)e_{N_{2}}(x),\vspace{2mm}\\
B(x,y)=&x\tilde{f}_{1}(N_{1}-1,N_{2},y)\tilde{e}_{N_{1}-1}(y)-\tilde{f}_{3}(N_{1},N_{2},y)\tilde{e}_{N_{1}}(y),\vspace{2mm}\\
C(x,y)=&K(\pi(N_{1}-1,N_{2}-1),\pi(N_{1}-1,N_{2}),\pi(N_{1},N_{2}-1),x,y)\\&+yf_{1}(N_{1},N_{2}-1,x)t_{N_{2}-1}(x)-f_{3}(N_{1},N_{2},x)t_{N_{2}}(x)\\&+x\tilde{f}_{1}(N_{1}-1,N_{2},y)\tilde{t}_{N_{1}-1}(y)-\tilde{f}_{3}(N_{1},N_{2},y)\tilde{t}_{N_{1}}(y),
\end{array}
\end{displaymath}
\begin{displaymath}
\begin{array}{l}
K(\pi(N_{1}-1,N_{2}-1),\pi(N_{1}-1,N_{2}),\pi(N_{1},N_{2}-1),\pi(N_{1},N_{2}),x,y)\\
=xy\pi(N_{1}-1,N_{2}-1)p_{1,1}(N_{1}-1,N_{2}-1)+p_{-1,-1}\pi(N_{1},N_{2})\\-y\pi(N_{1},N_{2}-1)p_{1,-1}(N_{1}-1,N_{2})
-x\pi(N_{1}-1,N_{2})p_{-1,1}(N_{1},N_{2}-1).
\end{array}
\end{displaymath}
\end{enumerate}
Note that $g(x,y)$ is for every fixed $x$ with $|x|\leq 1$, regular in $y$ for $|y|<1$, continuous in $y$ for $|y|\leq 1$, and similarly with $x$, $y$ interchanged.
\begin{remark} Note that for the general case we have $\Psi(0,0)>0$. The analysis is considerably different when $\Psi(0,0)=0$, i.e., when $p_{-1,-1}=0$; see Section \ref{fayo} for more details.\end{remark}
 
Note that for the probabilities of states in region $S_{0}$ we have equations (\ref{e1}), for  $n_{1}= 0, 1,..., N_{1}-1$, $n_{2} = 0, 1,..., N_{2}-1$. For those in region $S_{1}$, we have equations (\ref{r2}), $n_{2}=0, 1,...,N_{2} -1$; for those in $S_2$, we have equations (\ref{r3}), $n_{1} = 0, 1,..., N_{1}-1$.  
For region $S_{3}$, all unknown quantities are expressed in terms of
\begin{enumerate}
\item  $g_{0}(x)$, $h_{0}(y)$, 
\item the $N_{1}+N_{2}+1$ probabilities, $\pi(N_{1},n_{2})$, $n_{2} =0, 1,...,N_{2} -1$, and $\pi(n_{1},N_{2})$, $n_{1} =0, 1,...,N_{1} -1$ and $\pi(N_{1},N_{2})$.  
\end{enumerate}

\subsection{Kernel analysis}
Our aim in this section is to determine $g_{0}(x)$, $h_{0}(y)$, in terms of the solution of a Riemann boundary value problem. Thus, as a first step, we have to investigate the zeros of the kernel equation $R(x,y)=0$. 

Consider the kernel for 
\begin{equation}
x=gs,\,\,\,y=gs^{-1},\,\,|s|=1,\,|g|\leq1.\label{dgh}
\end{equation}
It follows that for $s=e^{i\phi}$, (\ref{dgh}) define a one-to-one mapping $f$ between $x(\phi)$ and $y(\phi)$, i.e., $x(\phi)=f(y(\phi))$ or $y(\phi)=f^{-1}(x(\phi))$, $\phi\in[0,2\pi)$, and
\begin{equation}\\
\begin{array}{rl}
R(gs,gs^{-1})=&g^{2}-\Psi(gs,gs^{-1})=g^{2}-\mathbb{E}(g^{\xi_{1}^{(3)}+\xi_{2}^{(3)}}s^{\xi_{1}^{(3)}-\xi_{2}^{(3)}})\\
=&g^{2}-\sum_{i=-1}^{1}\sum_{j=-1}^{1}p_{i,j}g^{i+j+2}s^{i-j}.
\end{array}
\end{equation}
It is readily seen that $\Psi(gs,gs^{-1})$ is for $g=1$ regular at $s=1$, and for $s=1$, regular at $g=1$ (i.e., all moments exist and are finite), whereas,
\begin{equation}
\begin{array}{l}
R(gs,gs^{-1})=0\Leftrightarrow g^{2}=\frac{p_{-1,-1}+p_{-1,0}gs^{-1}+p_{0,-1}gs}{1-\sum_{\substack{ i=-1\\i+j\geq0}}^{1}\sum_{j=-1}^{1}p_{i,j}g^{i+j}s^{i-j}}.
\end{array}\label{fty}
\end{equation}
Note that for $|g|\leq1$, $|s|=1$, the denominator in (\ref{fty}) never vanishes. Indeed, 
\begin{displaymath}
\begin{array}{rl}
|\sum_{\substack{ i=-1\\i+j\geq0}}^{1}\sum_{j=-1}^{1}p_{i,j}g^{i+j}s^{i-j}|\leq& p_{0,0}+p_{0,1}+p_{1,-1}+p_{1,0}+p_{1,1}+p_{-1,1}\\=&1-(p_{-1,-1}+p_{-1,0}+p_{0,-1})<1.
\end{array}
\end{displaymath}

Let $E_{x}=p_{1,0}+p_{1,1}+p_{1,-1}-(p_{-1,1}+p_{0,-1}+p_{-1,-1})$, $E_{y}=p_{0,1}+p_{1,1}+p_{-1,1}-(p_{-1,-1}+p_{0,-1}+p_{1,-1})$, i.e., the mean drifts in region $S_{3}$. 
\begin{theorem}\label{th1}\begin{enumerate}
\item If $E_{x}<0$, $E_{y}<0$, the kernel $R(gs,gs^{-1})$, $|s|=1$ has in $|g|\leq1$ exactly two zeros each with multiplicity one, which are both real for $s=\pm 1$.
\item If $g(s)$ is a zero, so is $-g(-s)$.
\end{enumerate}
\end{theorem}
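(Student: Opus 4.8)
The plan is to study the analytic function
$g^2 = \Psi(gs,gs^{-1})/g^{\,0}$ — more precisely to count, for each fixed $s$ on the unit circle, the roots in $g$ of $R(gs,gs^{-1})=g^2-\Psi(gs,gs^{-1})$ inside the closed unit disc. Writing $\Psi(gs,gs^{-1})=\sum_{i,j}p_{i,j}g^{i+j+2}s^{i-j}$, the kernel is a polynomial of degree $4$ in $g$ (the term $p_{1,1}g^4s^0$ being the top coefficient), with a double factor of $g^2$ appearing on the left; so the natural statement ``exactly two zeros'' must be read in the sense that the equation $g^2(1-\text{(analytic, bounded-by-}<1\text{)})=p_{-1,-1}+p_{-1,0}gs^{-1}+p_{0,-1}gs$ from (\ref{fty}) has exactly two solutions in $|g|\le 1$. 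First I would fix this reformulation: by the bound established just before the theorem, the denominator $D(g,s):=1-\sum_{i+j\ge 0}p_{i,j}g^{i+j}s^{i-j}$ satisfies $|D(g,s)-1|<1$ on $|g|\le 1,\ |s|=1$, hence $D$ is zero-free there and $g\mapsto g^2 D(g,s)$ is a well-defined analytic function on $|g|\le 1$.

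Next I would apply Rouch\'e's theorem on the circle $|g|=1$ to the pair
$\Phi_1(g):=g^2D(g,s)$ and $\Phi_2(g):=p_{-1,-1}+p_{-1,0}gs^{-1}+p_{0,-1}gs$, comparing $\Phi_1$ with $\Phi_1-\Phi_2 = R(gs,gs^{-1})/1$. On $|g|=1$ one has $|\Phi_1(g)| = |D(g,s)| \ge 1-\sum_{i+j\ge 0}p_{i,j} = p_{-1,-1}+p_{-1,0}+p_{0,-1}$, while $|\Phi_2(g)|\le p_{-1,-1}+p_{-1,0}+p_{0,-1}$, with equality only in degenerate situations. The delicate point is that this only gives $|\Phi_1|\ge|\Phi_2|$ with possible equality, so strict Rouch\'e does not apply verbatim; here I would invoke the drift hypotheses $E_x<0$, $E_y<0$. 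These conditions are exactly what guarantees (via a standard argument, e.g.\ differentiating $R$ along $|g|=1$ at $s=1$, or the ergodicity criterion of \cite{fay2}) that there is no zero of $R$ on $|g|=1$ other than those forced by $s=\pm1$, so that one can perturb to a slightly larger circle $|g|=1+\varepsilon$ on which strict inequality holds, or alternatively use the argument principle directly by tracking the winding number of $R(ge^{i\phi}s,\ldots)$ as $g$ traverses $|g|=1$. Since $\Phi_1$ has a zero of order exactly $2$ at $g=0$ and no other zeros in $|g|<1$ (because $D$ is zero-free), Rouch\'e then yields exactly two zeros of $R$ in $|g|\le 1$, counted with multiplicity. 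Simplicity of each zero for generic parameters, and the fact that for $s=\pm1$ the kernel reduces to a polynomial in $g$ with real coefficients whose two relevant roots are real (one can check the values at $g=0$ and $g=1$ have opposite signs, using $E_x,E_y<0$ to control the sign at $g=1$, which places one real root in $(0,1)$, the other handled symmetrically), completes part (i).

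For part (ii), I would simply exploit the symmetry of the kernel under $(g,s)\mapsto(-g,-s)$: from $R(gs,gs^{-1})=g^2-\sum_{i,j}p_{i,j}g^{i+j+2}s^{i-j}$, replacing $g\to -g$, $s\to -s$ sends $g^{i+j+2}\to(-1)^{i+j}g^{i+j+2}$ and $s^{i-j}\to(-1)^{i-j}s^{i-j}$, and since $(-1)^{i+j}(-1)^{i-j}=(-1)^{2i}=1$ for all integer $i$, the whole expression is invariant: $R((-g)(-s),(-g)(-s)^{-1})=R(gs,gs^{-1})$. Hence if $g(s)$ solves $R(g(s)s,g(s)s^{-1})=0$ then $\tilde g(s):=-g(-s)$ satisfies $R(\tilde g(s)s,\tilde g(s)s^{-1})=R(g(-s)(-s),g(-s)(-s)^{-1})=0$, which is the claim; moreover $|\tilde g(s)|=|g(-s)|\le 1$, so it stays in the disc.

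The main obstacle I anticipate is the borderline Rouch\'e estimate: the naive bound gives only $|\Phi_1|\ge|\Phi_2|$ on $|g|=1$ with equality reachable (e.g.\ at $g=1$, $s=1$), so the count of zeros on the boundary circle itself must be pinned down, and this is precisely where the assumptions $E_x<0$, $E_y<0$ enter — they ensure that the only boundary zeros occur at $(g,s)=(\pm1,\pm1)$ and are simple, after which a limiting/perturbation argument over a circle of radius $1+\varepsilon$, or a careful argument-principle computation, upgrades the weak inequality to the exact count of two. Establishing ``no other zeros on $|g|=1$'' rigorously (rather than quoting it) is the real work.
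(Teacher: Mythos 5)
Your proposal is correct and follows essentially the same route as the paper: the paper's own proof (Appendix A.1) is a one-line Rouch\'e argument comparing $|\Psi(gs,gs^{-1})|\leq 1=|g^{2}|$ on $|g|=1$, with the drift conditions $E_{x}<0$, $E_{y}<0$ invoked only to make $g=1$ a simple zero at $s=1$, which is exactly the mechanism you use (your variant via the decomposition $g^{2}D(g,s)$ versus the low-order terms from (\ref{fty}) is equivalent). If anything you are more careful than the paper: you explicitly flag the boundary-equality case at $(g,s)=(\pm1,\pm1)$ that the paper glosses over, and you actually supply the symmetry argument $R((-g)(-s),(-g)(-s)^{-1})=R(gs,gs^{-1})$ for part (ii), which the paper's proof omits entirely.
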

\begin{proof}
See Appendix \ref{nnrw}.
\end{proof}

Define,
\begin{displaymath}
\mathcal{S}_{1}:=\{x:x=g(s)s,|s|=1\},\,\,\mathcal{S}_{2}:=\{y:y=g(s)s^{-1},|s|=1\},
\end{displaymath}
where $g(s)$ the positive zero of the kernel. In the following we have to show that $S_{1}$, $S_{2}$ are simple and smooth, i.e., they are closed, non-self intersecting curves with a continuously varying tangent; see Figures \ref{conto}, \ref{conto1} for some values of the parameters.
\begin{figure}[ht!]
\centering
\includegraphics[scale=0.6]{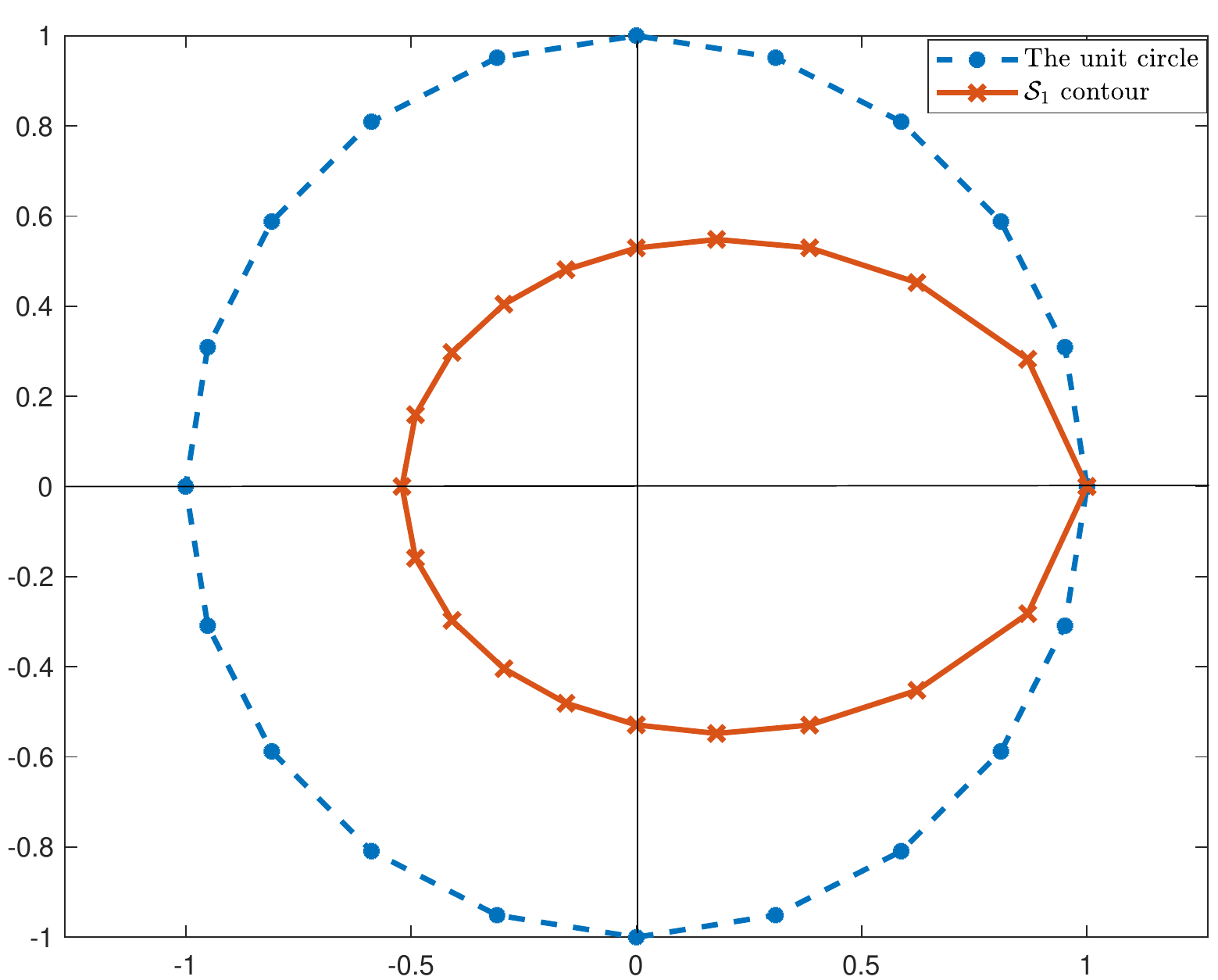}
\caption{The contour $\mathcal{S}_{1}$ for the symmetrical case ($\mathcal{S}_{1}$ coincides with $\mathcal{S}_{2}$) where $p_{0,1}=p_{1,0}$, $p_{-1,0}=p_{0,-1}$, $p_{1,-1}=p_{-1,1}$, and the unit circle.}\label{conto}
\end{figure}
\begin{figure}[ht!]
\centering
\includegraphics[scale=0.6]{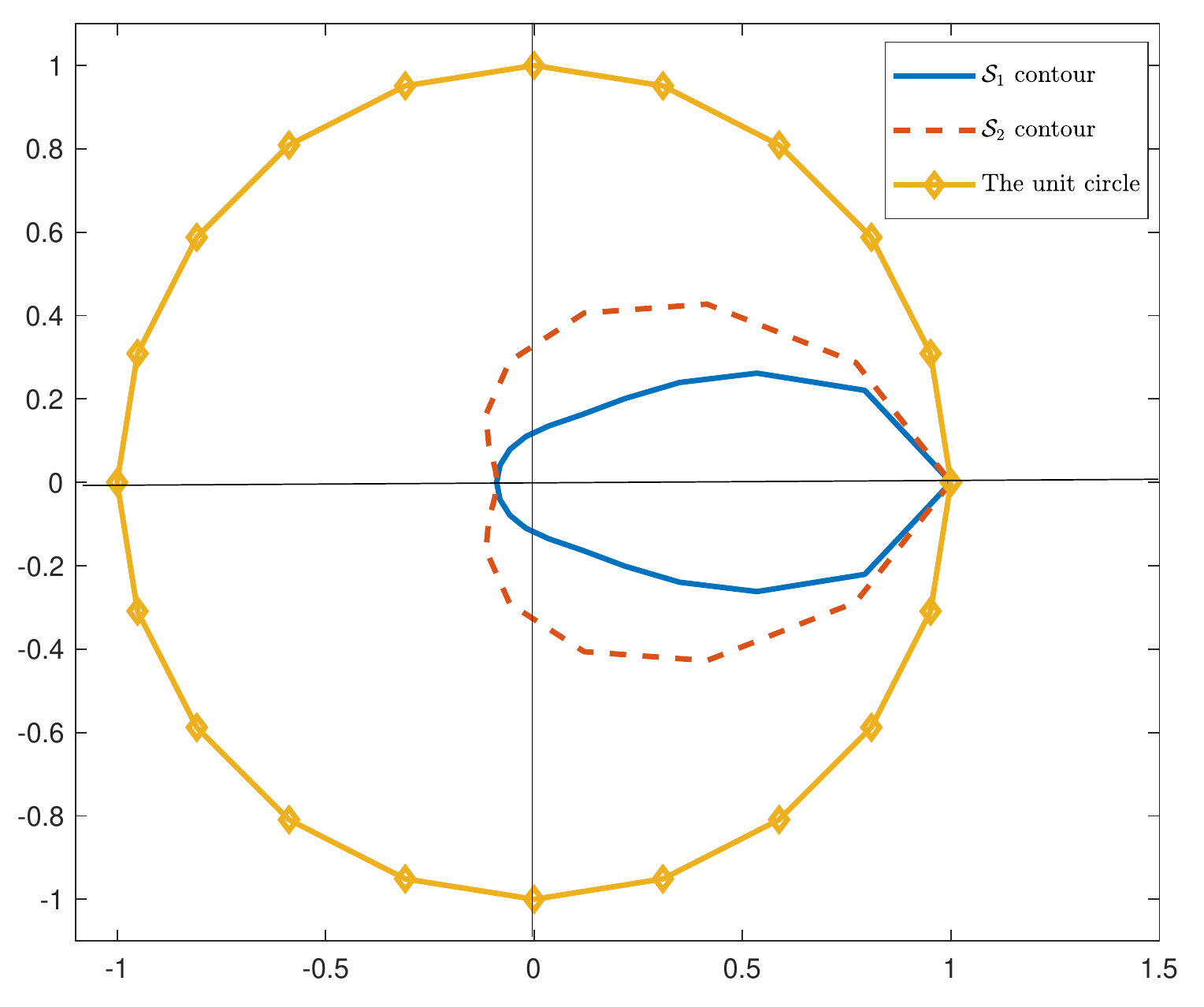}
\caption{The contours $\mathcal{S}_{1}$, $S_{2}$ for the asymmetrical case and the unit circle.}\label{conto1}
\end{figure}
Simple calculations show that for $|g|\leq 1$ satisfying (\ref{fty}),
\begin{equation}
\frac{s}{g}\frac{d}{ds}g(s)=\frac{\mathbb{E}((\xi_{1}^{(3)}-\xi_{2}^{(3)})g^{\xi_{1}^{(3)}+\xi_{2}^{(3)}}s^{\xi_{1}^{(3)}-\xi_{2}^{(3)}})}{\mathbb{E}((2-\xi_{1}^{(3)}-\xi_{2}^{(3)})g^{\xi_{1}^{(3)}+\xi_{2}^{(3)}}s^{\xi_{1}^{(3)}-\xi_{2}^{(3)}})}.
\end{equation}
Thus, if $\mathbb{E}((2-\xi_{1}^{(3)}-\xi_{2}^{(3)})g^{\xi_{1}^{(3)}+\xi_{2}^{(3)}}s^{\xi_{1}^{(3)}-\xi_{2}^{(3)}})\neq0$ for $|g|\leq 1$, $|s|=1$, both zeros of (\ref{fty}) have multiplicity one and each of these zeros is an analytic function of $s$ on the unit circle $|s|=1$.
\begin{theorem}\label{yui}
If $E_{x}<0$, $E_{y}<0$, and $\mathbb{E}((2-\xi_{1}^{(3)}-\xi_{2}^{(3)})g^{\xi_{1}^{(3)}+\xi_{2}^{(3)}}s^{\xi_{1}^{(3)}-\xi_{2}^{(3)}})\neq0$, then $\mathcal{S}_{1}$, $\mathcal{S}_{2}$ are both smooth and analytic contours except possibly at $s=1$. Moreover $x=0\in \mathcal{S}_{1}^{+}$, $y=0\in\mathcal{S}_{2}^{+}$, where $\mathcal{S}_{j}^{+}$ denotes the interior domain bounded by $\mathcal{S}_{j}$, $j=1,2$; see also Figures \ref{conto}, \ref{conto1}.
\end{theorem}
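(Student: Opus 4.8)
The plan is to parametrise the two curves by $s=e^{i\phi}$, $\phi\in[0,2\pi)$, writing $x(s)=g(s)s$ and $y(s)=g(s)s^{-1}$ with $g(s)$ the positive zero of the kernel, and to read off smoothness, non-self-intersection and the position of the origin from the signs of $\frac{d}{d\phi}\arg x(e^{i\phi})$ and $\frac{d}{d\phi}\arg y(e^{i\phi})$. Three preliminary facts are recorded first. Since the kernel has real coefficients and $g$ is the branch that is real on the real axis, $g(\bar s)=\overline{g(s)}$, so $x(\bar s)=\overline{x(s)}$, $y(\bar s)=\overline{y(s)}$ and $\mathcal{S}_{1},\mathcal{S}_{2}$ are symmetric about the real axis; moreover the kernel has $g=1$ as its positive zero at $s=1$, hence $x(1)=y(1)=1$. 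Next, $g(s)\neq0$ on $|s|=1$, since by $(\ref{fty})$ a zero $g=0$ would force $p_{-1,-1}=\Psi(0,0)=0$, which is excluded in the general case; thus $x(s),y(s)\neq0$ on $|s|=1$. Finally, the standing non-degeneracy hypothesis $\mathbb{E}((2-\xi_{1}^{(3)}-\xi_{2}^{(3)})g^{\xi_{1}^{(3)}+\xi_{2}^{(3)}}s^{\xi_{1}^{(3)}-\xi_{2}^{(3)}})\neq0$ together with the simplicity of the zero $g(s)$ established in Theorem \ref{th1} let one apply the implicit function theorem to $R(gs,gs^{-1})=0$, whence $g(\cdot)$, and with it $x(\cdot)$ and $y(\cdot)$, is analytic in a neighbourhood of $|s|=1$ (the value $s=1$ being singled out only because there $g(1)=1$ sits on $|g|=1$). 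So $\mathcal{S}_{1},\mathcal{S}_{2}$ are analytic arcs, and it remains to prove that each is a regular, simple, closed curve with $0$ in its interior.

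The core of the argument is one inequality. Using the identity $\frac{d}{d\phi}\arg h(e^{i\phi})=\mathrm{Re}(s\,h'(s)/h(s))$ (valid for any analytic, non-vanishing $h$, with $s=e^{i\phi}$) applied to $h=x,y,g$, together with $sx'(s)/x(s)=1+sg'(s)/g(s)$ and $sy'(s)/y(s)=sg'(s)/g(s)-1$, one obtains
\[
\frac{d}{d\phi}\arg x(e^{i\phi})=1+\frac{d}{d\phi}\arg g(e^{i\phi}),\qquad \frac{d}{d\phi}\arg y(e^{i\phi})=\frac{d}{d\phi}\arg g(e^{i\phi})-1 .
\]
Hence the two inequalities I need, $\frac{d}{d\phi}\arg x>0$ and $\frac{d}{d\phi}\arg y<0$ at every point of the parameter circle, hold \emph{simultaneously} iff
\[
\Big|\,\mathrm{Re}\!\left(\frac{s\,g'(s)}{g(s)}\right)\Big|<1\qquad\text{for all }|s|=1,
\]
i.e. iff $\arg g(e^{i\phi})$ varies strictly more slowly than $\phi$. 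By the formula for $\frac{s}{g}\frac{d}{ds}g(s)$ displayed above, $\mathrm{Re}(sg'(s)/g(s))$ is the real part of an explicit quotient of two finite $\Psi$-type sums in the $p_{i,j}$, and \emph{proving this bound is the main obstacle.} I would attack it along the lines of Cohen and Boxma's treatment of the homogeneous simple random walk \cite{coh1}: writing $z=gs$, $w=gs^{-1}$ (both of modulus at most one) reduces numerator and denominator to short Laurent polynomials with the transition probabilities as coefficients, and the sign is controlled by the elementary relations $0\le 1-\xi_{k}^{(3)}\le 2-\xi_{1}^{(3)}-\xi_{2}^{(3)}$ and $|g|\le1$. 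The delicate term is the $g^{-2}$ generated by the south-west step: its coefficient $p_{-1,-1}=\Psi(0,0)$ must be strictly positive for the estimate to close — exactly the general-case assumption — and the conclusion indeed breaks down when $p_{-1,-1}=0$ (whence the separate analysis of Section \ref{fayo}, cf.\ the Remark).

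Granting the bound, the rest is soft. From $\frac{d}{d\phi}\arg x(e^{i\phi})>0$ we get $x'(s)\neq0$ on $|s|=1$, so $\mathcal{S}_{1}$ is smooth with a continuously turning tangent, and $\phi\mapsto\arg x(e^{i\phi})$ is strictly increasing; likewise $\phi\mapsto\arg y(e^{i\phi})$ is strictly decreasing. Let $n_{x},n_{y}\in\mathbb{Z}$ be the winding numbers of $x(e^{i\cdot}),y(e^{i\cdot})$ about $0$, equal to $(2\pi)^{-1}$ times the total increments of $\arg x,\arg y$ over $[0,2\pi]$; then $n_{x}\ge1$ and $n_{y}\le-1$. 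On the other hand $x(s)/y(s)=s^{2}$, so along the continuous branches normalised by $\arg x(1)=\arg y(1)=0$ one has $\arg x(e^{i\phi})-\arg y(e^{i\phi})=2\phi$, hence $n_{x}-n_{y}=2$; combined with $n_{x}\ge1$, $-n_{y}\ge1$ this forces $n_{x}=1$, $n_{y}=-1$. In particular $\arg x$ increases by exactly $2\pi$ over $[0,2\pi]$, so being strictly monotone $\phi\mapsto x(e^{i\phi})$ is injective on $[0,2\pi)$ and $\mathcal{S}_{1}$ is a Jordan curve; the same holds for $\mathcal{S}_{2}$; and, having winding numbers $1$ and $-1$ respectively about $0$, both curves have the origin in the bounded component of their complement, i.e.\ $x=0\in\mathcal{S}_{1}^{+}$, $y=0\in\mathcal{S}_{2}^{+}$. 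Finally, the clause ``except possibly at $s=1$'' is conservative: there $g(1)=1$, the angular-derivative identities remain valid, a direct computation gives $\frac{d}{d\phi}\arg x(1)>0$ and $\frac{d}{d\phi}\arg y(1)<0$ (each an elementary inequality in the drifts $E_{x},E_{y}$, true because $E_{x},E_{y}<0$), and $g$ is in fact analytic at $s=1$ since $(1,1)$ is a non-singular point of the kernel curve and $s^{2}=x/y$ is a local parameter there — so I expect smoothness at $s=1$ as well, the hedge being retained only for safety.
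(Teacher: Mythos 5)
Your architecture is exactly the one the paper intends: the paper's own ``proof'' of Theorem \ref{yui} is a one-line deferral to Lemma 2.2 of \cite{coh1}, and your write-up is essentially a fleshed-out version of that route. The symmetry and non-vanishing observations, the identities $\frac{d}{d\phi}\arg x(e^{i\phi})=1+\mathrm{Re}(sg'(s)/g(s))$ and $\frac{d}{d\phi}\arg y(e^{i\phi})=\mathrm{Re}(sg'(s)/g(s))-1$, the winding-number bookkeeping via $x(s)/y(s)=s^{2}$ forcing $n_{x}-n_{y}=2$ and hence $n_{x}=1$, $n_{y}=-1$, and the deduction of simplicity and of $0\in\mathcal{S}_{j}^{+}$ from strict monotonicity of the arguments are all correct and carefully done --- more explicitly than anything in the paper. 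The local check at $s=1$ also reduces, as you say, to elementary sign conditions on $E_{x},E_{y}$ and goes through.

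The genuine gap is the one you flag yourself: the estimate $\bigl|\mathrm{Re}\bigl(sg'(s)/g(s)\bigr)\bigr|<1$ for all $|s|=1$ is declared ``the main obstacle'' and then only sketched, not proved. Every downstream conclusion --- smoothness ($x'\neq0$), injectivity, the winding numbers being $\geq 1$ and $\leq -1$, and therefore the position of the origin --- rests on that single inequality, so as written the proof does not close. Note in particular that the hypothesis $\mathbb{E}((2-\xi_{1}^{(3)}-\xi_{2}^{(3)})g^{\xi_{1}^{(3)}+\xi_{2}^{(3)}}s^{\xi_{1}^{(3)}-\xi_{2}^{(3)}})\neq0$ buys you only that the denominator in the displayed formula for $\frac{s}{g}\frac{d}{ds}g(s)$ is nonzero, i.e.\ that the zeros are simple and $g$ is analytic --- which you have already spent on analyticity --- and does not by itself bound the real part of the quotient by $1$. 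Be aware also that your route establishes something strictly stronger than the theorem claims (that $\mathcal{S}_{1}$, $\mathcal{S}_{2}$ are star-shaped with respect to the origin), so if the bound were to fail at some point of $|s|=1$ for some admissible $p_{i,j}$, the argument would need replacement rather than repair. To complete the proof you should either carry out the Laurent-polynomial estimate you sketch (where the strict positivity of $p_{-1,-1}=\Psi(0,0)$ must indeed enter, consistently with the breakdown in Section \ref{fayo}), or verify that Lemma 2.2 of \cite{coh1} applies verbatim to the present kernel and import its proof of the bound.
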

\begin{proof}
The proof follows the lines in Lemma 2.2 in \cite{coh1} and further details are omitted.
\end{proof}

Theorem \ref{yui} implies\footnote{See also theorem 1.1 in \cite{coh2}} that there exists a unique simple contour $\mathcal{L}$ in the $z$-plane with
\begin{displaymath}
z=0\in\mathcal{L}^{+},\,\,z=1\in\mathcal{L},\,\,z=\infty\in\mathcal{L}^{-},
\end{displaymath}
and functions 
\begin{displaymath}
x(z):\mathcal{L}^{+}\cup\mathcal{L}\to\mathcal{S}_{1}^{+}\cup\mathcal{S}_{1},\,\,y(z):\mathcal{L}^{-}\cup\mathcal{L}\to\mathcal{S}_{2}^{+}\cup\mathcal{S}_{2},
\end{displaymath}
such that
\begin{enumerate}
\item $z=0$ is a simple zero of $x(.)$, and $z=\infty$ is a simple zero of $y(.)$, and $0<d:=\lim_{|z|\to\infty}zy(z)<\infty$,
\item $x(z):\mathcal{L}^{+}\to\mathcal{S}_{1}^{+}$ is regular and univalent for $z\in \mathcal{L}^{+}$,
\item $y(z):\mathcal{L}^{-}\to\mathcal{S}_{2}^{+}$ is regular and univalent for $z\in \mathcal{L}^{-}$,
\item $x(z)=f(y(z))$, $z\in\mathcal{L}^{+}$,
\item $x^{+}(z)$, $y^{-}(z)$, $z\in\mathcal{L}$ is a zero pair of the kernel $R(x,y)=0$, with $x^{+}(z)\in\mathcal{S}_{1}$, $y^{+}(z)\in\mathcal{S}_{2}$, where for $z\in\mathcal{L}$, $x^{+}(z)=\lim_{t\to z,t\in\mathcal{L}^{+}}x(t)$, $y^{-}(z)=\lim_{t\to z,t\in\mathcal{L}^{-}}y(t)$.
\end{enumerate}

Thus, following \cite[Section II.3.6]{coh1} for $z\in\mathcal{L}$, there exists a real function $\lambda(z)$ such that $\lambda(1)=0$, and 
\begin{displaymath}
x^{+}(z)=g(e^{i\lambda(z)})e^{i\lambda(z)},\,\,\,\,y^{-}(z)=g(e^{i\lambda(z)})e^{-i\lambda(z)},
\end{displaymath}
and
\begin{equation}
\begin{array}{rl}
x(z)=&ze^{\frac{1}{2i\pi}\int_{\zeta\in\mathcal{L}}log[g(e^{i\lambda(z)})][\frac{\zeta+z}{\zeta-z}-\frac{\zeta+1}{\zeta-1}]\frac{d\zeta}{\zeta}},\,z\in\mathcal{L}^{+},\\
y(z)=&z^{-1}e^{\frac{-1}{2i\pi}\int_{\zeta\in\mathcal{L}}log[g(e^{i\lambda(z)})][\frac{\zeta+z}{\zeta-z}-\frac{\zeta+1}{\zeta-1}]\frac{d\zeta}{\zeta}},\,z\in\mathcal{L}^{-},\\
e^{i\lambda(z)}=&ze^{\frac{1}{2i\pi}\int_{\zeta\in\mathcal{L}}log[g(e^{i\lambda(z)})][\frac{\zeta+z}{\zeta-z}-\frac{\zeta+1}{\zeta-1}]\frac{d\zeta}{\zeta}},\,z\in\mathcal{L}.
\end{array}
\label{ml}
\end{equation}
The last in (\ref{ml}) represents an integral equation for the determination of $\lambda(.)$ and $\mathcal{L}$. In particular, by taking the equivalent integral equation:
\begin{equation}
i\lambda(z)-ln[z]=\frac{1}{2i\pi}\int_{\zeta\in\mathcal{L}}log[g(e^{i\lambda(z)})][\frac{\zeta+z}{\zeta-z}-\frac{\zeta+1}{\zeta-1}]\frac{d\zeta}{\zeta},\,z\in\mathcal{L},
\label{soo}
\end{equation} 
with $\mathcal{L}=\{z:z=\rho(\phi)e^{i\phi},0\leq\phi\leq 2\pi\}$, $\theta(\phi)=\lambda(\rho(\phi)e^{i\phi})$. Separating real and imaginary parts in (\ref{soo}) will lead to two singular integral equations in the two unknowns functions $\rho(.)$, $\theta(.)$. For a numerical treatment of (\ref{soo}), which may be regarded as a generalization of the Theodorsen's integral equation, see \cite[Section IV.2.3]{coh1}.
\subsection{Solution of the functional equation}
Since $(x^{+}(z),y^{-}(z))$, $z\in\mathcal{L}$ is a zero pair of the kernel, it should hold for $z\in\mathcal{L}$
\begin{equation}
A(x^{+}(z),y^{-}(z))g_{0}(x^{+}(z))+B(x^{+}(z),y^{-}(z))h_{0}(y^{-}(z))+C(x^{+}(z),y^{-}(z))=0,
\end{equation}
or equivalently
\begin{equation}
\tilde{g}_{0}(z)=G(z)\tilde{h}_{0}(z)+\tilde{c}(z),\,\,z\in\mathcal{L},
\label{pp}\end{equation}
where $\tilde{g}_{0}(z):=g_{0}(x^{+}(z))$, $\tilde{h}_{0}(z):=h_{0}(y^{-}(z))$ and
\begin{displaymath}
\begin{array}{rl}
G(z):=-\frac{B(x^{+}(z),y^{-}(z))}{A(x^{+}(z),y^{-}(z))},\,\,\,\tilde{c}(z):=-\frac{C(x^{+}(z),y^{-}(z))}{A(x^{+}(z),y^{-}(z))}.
\end{array}
\end{displaymath}

To proceed, we have to ensure that $G(.)$ and $\tilde{c}(.)$ satisfy the Holder condition and $G(.)$ never vanishes. However, the general form of $G(.)$ cannot exclude the possibility both of vanishing, and on taking infinite values at some points of $\mathcal{L}$. More importantly, the poles of $G(.)$ that are located (if any) in the region bounded by $\mathcal{L}$ and the unit circle will be also poles of $\tilde{g}_{0}(.)$. Let $a_{k}$, $k=1,\ldots,m$ the poles of $G(.)$ (i.e., the zeros of $A(x^{+}(z),y^{-}(z))$), with multiplicity $u_{k}$, and let also $b_{s}$, $s=1,\ldots,l$ the zeros of $G(.)$ with multiplicity $p_{s}$. Denote
\begin{displaymath}
\begin{array}{ll}
\widehat{g}_{0}(z):=\prod_{k=1}^{m}(z-a_{k})^{u_{k}}\tilde{g}_{0}(z),&\widehat{h}_{0}(z):=\prod_{s=1}^{l}(z-b_{s})^{p_{s}}\tilde{h}_{0}(z)\\\widehat{G}(z):=\frac{\prod_{s=1}^{l}(z-a_{k})^{u_{k}}}{\prod_{k=1}^{m}(z-b_{c})^{p_{s}}}G(z),&\widehat{c}(z):=\prod_{s=1}^{l}(z-a_{k})^{u_{k}}\tilde{c}(z).
\end{array}
\end{displaymath}
Then, (\ref{pp}) reads for $z\in\mathcal{L}$
\begin{equation}
\widehat{g}_{0}(z)=\widehat{G}(z)\widehat{h}_{0}(z)+\widehat{c}(z),
\label{pqp}
\end{equation}
and is the boundary condition of a non-homogeneous Riemann boundary value problem \cite{ga}. 

If the index $\chi:=index[\widehat{G}(z)]_{\mathcal{L}}\geq 0$, then
\begin{equation}
\begin{array}{rl}
g_{0}(z):=&\prod_{k=1}^{m}(z-a_{k})^{-u_{k}}e^{\Gamma(z)}[\Phi(z)+P_{\chi}(z)],\,z\in\mathcal{L}^{+},\vspace{2mm}\\
h_{0}(z):=&\prod_{s=1}^{l}(z-b_{s})^{-p_{s}}e^{\Gamma(z)}[\Phi(z)+P_{\chi}(z)],\,z\in\mathcal{L}^{-},
\end{array}
\label{solo}
\end{equation}
where 
\begin{displaymath}
\begin{array}{rl}
\Gamma(z):=&\frac{1}{2i\pi}\int_{t\in\mathcal{L}}\log[t^{-\chi}G(t)]\frac{dt}{t-z},\,z\notin\mathcal{L},\vspace{2mm}\\
\Phi(z):=&\frac{1}{2i\pi}\int_{t\in\mathcal{L}}\widehat{c}(t)e^{-\Gamma^{+}(t)}\frac{dt}{t-z},\,z\notin\mathcal{L}.
\end{array}
\end{displaymath}
If $\chi<0$, then 
\begin{equation}
\begin{array}{rl}
g_{0}(z):=&\prod_{k=1}^{m}(z-a_{k})^{-u_{k}}e^{\Gamma(z)}\Phi(z),\,z\in\mathcal{L}^{+},\vspace{2mm}\\
h_{0}(z):=&\prod_{s=1}^{l}(z-b_{s})^{-p_{s}}e^{\Gamma(z)}\Phi(z),\,z\in\mathcal{L}^{-},
\end{array}
\label{solo1}
\end{equation}
but now the following $-\chi-1$ conditions must be satisfied
\begin{displaymath}
\int_{t\in\mathcal{L}}t^{r-1}\widehat{c}(t)e^{-\Gamma^{+}(t)}dt=0,\,r=1,2,...,-\chi-1.
\end{displaymath}
Having obtain $g_{0}(z)$, $h_{0}(z)$ we are able to obtain $g(x,y)$ in \eqref{fun}.

The following steps summarizes the way we can fully determine the stationary distribution: 
\begin{enumerate}
\item The $N_{1}\times N_{2}$ equations for $S_{0}$ involves $(N_{1}+1)\times(N_{2}+1)$ unknowns: $\pi(n_{1},n_{2})$ for $n_{1}=0,1,...,N_{1}$, $n_{2}=0,1,...,N_{2}$. Thus, we further need $N_{1}+N_{2}+1$ equations that involve the unknowns $\pi(N_{1},n_{2})$, $n_{2}=0,1,...,N_{2}-1$, and $\pi(n_{1},N_{2})$, $n_{1}=0,1,...,N_{1}-1$, and $\pi(N_{1},N_{2})$.
\item Note that $g_{0}(x)$, $h_{0}(y)$, are expressed in terms of $A(x,y)$, $B(x,y)$ and $C(x,y)$. The first two are known, and the third one contains $N_{1}+N_{2}+1$ unknown probabilities, i.e., $\pi(N_{1},n_{2})$, $n_{2} =0, 1,...,N_{2} -1$ and $\pi(n_{1},N_{2})$, $n_{1} =0, 1,...,N_{1} -1$, and $\pi(N_{1},N_{2})$. Thus, we need some additional equations. These additional equations are derived as follows at steps 3 and 4.
\item Use (\ref{r2}), (\ref{r3}) to express the unknown probabilities in terms of the values of $g_{0}(x)$, $h_{0}(y)$ at point 0, i.e., $\pi(N_{1},0)=g_{0}(0)$, $\pi(0,N_{2})=h_{0}(0)$ and
\begin{equation}
\begin{array}{rl}
\pi(N_{1},n_{2})=&e_{n_{2}}(0)g_{0}(0)+t_{n_{2}}(0),\,n_{2}=1,...,N_{2}-1,\vspace{2mm}\\
\pi(n_{1},N_{2})=&\tilde{e}_{n_{1}}(0)h_{0}(0)+\tilde{t}_{n_{1}}(0),\,n_{1}=1,...,N_{1}-1.
\end{array}
\end{equation}
This procedure will provide $N_{1}+N_{2}$ equations.
\item The normalization equation yields the last one:
\begin{equation}
\begin{array}{c}
1=\sum_{n_{1}=0}^{N_{1}-1}\sum_{n_{2}=0}^{N_{2}-1}\pi(n_{1},n_{2})+\sum_{n_{1}=0}^{N_{1}-1}h_{n_{1}}(1)+\sum_{n_{2}=0}^{N_{2}-1}g_{n_{2}}(1)+g(1,1)
\end{array}
\label{norq}
\end{equation}
\end{enumerate}

\section{The case where $\Psi(0,0)=0$}\label{fayo}
In the following we focus on the case $\Psi(0,0)=0$, i.e., $p_{-1,-1}=0$, and provide a slightly different analysis for the solution of (\ref{fun}), which is now reduced in terms of a solution of a Riemann-Hilbert boundary value problem. We focus only on the part that is different compared with the previous procedure, and relies on the analysis of the functional equation. We first provide the essential kernel analysis in the following subsection.
\subsection{Kernel analysis}
The kernel $R(x,y)$ is a quadratic polynomial with respect to $x$, $y$. Indeed,
\begin{displaymath}
R(x,y)=\widehat{a}(x)y^2+\widehat{b}(x)y+\widehat{c}(x)=a(y)x^{2}+b(y)x+c(y),
\end{displaymath}
where,
\begin{displaymath}
\begin{array}{rl}
\widehat{a}(x)=&-(xp_{0,1}+x^{2}p_{1,1}+p_{-1,1}),\\
\widehat{b}(x)=&x(1-p_{0,0})-p_{-1,0}-x^{2}p_{1,0},\\
\widehat{c}(x)=&-(x^{2}p_{1,-1}+xp_{0,-1}),\\
a(y)=&-(yp_{1,0}+y^{2}p_{1,1}+p_{1,-1}),\\
b(y)=&y(1-p_{0,0})-p_{0,-1}-y^{2}p_{0,1},\\
c(y)=&-(y^{2}p_{-1,1}+yp_{-1,0}).
\end{array}
\end{displaymath}

In the following we provide some technical lemmas that are necessary for the formulation of a Riemann-Hilbert boundary value problem, the solution of which provides the unknown partial generating functions $g_{0}(x)$, $h_{0}(y)$.
\begin{lemma}\label{lemm1}
For $|y|=1$, $y\neq1$, the kernel equation $R(x,y)=0$ has exactly one root $x=X_{0}(y)$ such that $|X_{0}(y)|<1$. For $\gamma:=p_{1,0}+p_{1,1}+p_{1,-1}-p_{-1,1}-p_{-1,0}<0$, $X_{0}(1)=1$. Similarly, we can prove that $R(x,y)=0$ has exactly one root $y=Y_{0}(x)$, such that $|Y_{0}(x)|\leq1$, for $|x|=1$.
\end{lemma}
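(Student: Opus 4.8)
The statement is a classical Rouché-type root count for a quarter-plane kernel, and the plan is to prove it essentially as in \cite{coh1,fay1}. The assertions about $Y_{0}(\cdot)$ will follow from those about $X_{0}(\cdot)$ by the symmetry $x\leftrightarrow y$ (the kernel is also a quadratic in $y$, $R(x,y)=\widehat a(x)y^{2}+\widehat b(x)y+\widehat c(x)$), so I will concentrate on $X_{0}$.

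\emph{Counting the roots inside the unit circle.} Fix $y$ with $|y|=1$, $y\neq1$, and view $R(\cdot,y)=xy-\Psi(x,y)$ as a polynomial of degree at most two in $x$. On $|x|=1$ the triangle inequality gives, monomial by monomial,
\begin{displaymath}
|\Psi(x,y)|\le\sum_{i,j}p_{i,j}|x|^{i+1}|y|^{j+1}=\sum_{i,j}p_{i,j}=1=|xy| ,
\end{displaymath}
using that $p_{-1,-1}=0$ does not change the total mass. Next I would show this is never an equality: if it were at some $x$ with $|x|=1$, then equality in the triangle inequality, together with $\Psi(x,y)=xy$ at such a root, forces every monomial $p_{i,j}x^{i+1}y^{j+1}$ with $p_{i,j}>0$ to equal $xy$, i.e.\ $x^{i}y^{j}=1$ for every admissible interior step $(i,j)$; since the interior jumps generate $\mathbb{Z}^{2}$ (irreducibility), this gives $x=y=1$, contradicting $y\neq1$. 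Hence $R(\cdot,y)$ has no zero on $|x|=1$, and the modulus bound together with the absence of boundary zeros — invoked through the symmetric (Glicksberg) form of Rouché's theorem, or a limiting argument on $|x|=1-\varepsilon$ — yields that $R(\cdot,y)$ has exactly as many zeros in $|x|<1$ as $x\mapsto xy$, namely one. This is $X_{0}(y)$, and uniqueness of the zero in $|x|<1$ makes $X_{0}$ a single-valued continuous function of $y$ for $|y|=1$, $y\neq1$.

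\emph{The value at $y=1$ and the symmetric statement.} At $y=1$ we have $R(x,1)=a(1)x^{2}+b(1)x+c(1)$ with $R(1,1)=1-\sum_{i,j}p_{i,j}=0$, so $x=1$ is a root; the product of the two roots equals $c(1)/a(1)=(p_{-1,1}+p_{-1,0})/(p_{1,0}+p_{1,1}+p_{1,-1})$, which exceeds $1$ precisely when $\gamma<0$. Thus under $\gamma<0$ the second root lies strictly outside the closed unit disc, and since $X_{0}(y)$ converges as $y\to1$ to a root of $R(\cdot,1)$ of modulus $\le1$, necessarily $X_{0}(1)=1$. Running the two steps above with $x$ and $y$ interchanged gives, for $|x|=1$, exactly one root $Y_{0}(x)$ with $|Y_{0}(x)|\le1$; the bound is stated non-strictly only to absorb $x=1$, where $Y_{0}(1)=1$ lies on the circle (the other root of $R(1,\cdot)=0$ having modulus $>1$ under the ergodicity assumption $E_{y}<0$).

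\emph{Main obstacle.} The one genuinely delicate point is the passage from the inequality $|\Psi|\le1$ to an exact count, because that inequality is not strict in every direction — it degenerates, for instance, when $x$ or $y$ equals $-1$ — so the elementary version of Rouché's theorem cannot be quoted verbatim. The remedy is exactly the two-stage argument above: first rule out zeros of $R(\cdot,y)$ on $|x|=1$ for $y\neq1$ using irreducibility of the interior walk, then count with the symmetric form of Rouché's theorem (or by shrinking the contour to $|x|=1-\varepsilon$ and passing to the limit). Everything else is bookkeeping on the explicit quadratics $a(\cdot),b(\cdot),c(\cdot)$ and $\widehat a(\cdot),\widehat b(\cdot),\widehat c(\cdot)$.
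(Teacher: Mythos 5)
Your proposal is correct and follows essentially the same route as the paper: a Rouch\'e count based on $|\Psi(x,y)|\leq 1=|xy|$ for $|x|=1$, followed by the factorization of $R(x,1)$ (equivalently Vieta's relation $c(1)/a(1)$) to locate the second root outside the unit disc when $\gamma<0$. The only difference is that the paper simply asserts the strict inequality $|\Psi(x,y)|<1$ and invokes classical Rouch\'e, whereas your equality-case analysis via irreducibility of the interior jumps actually proves that strict inequality at every point of $|x|=1$ when $y\neq 1$ (not merely at putative boundary roots), so you may quote the classical form of Rouch\'e directly and dispense with the symmetric/Glicksberg variant and the shrinking-contour argument.
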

\begin{proof}
See Appendix \ref{ap}.
\end{proof}


Next step is to identify the location of the branch points of the two valued function $Y_{\pm}(x)=\frac{-\widehat{b}(x)\pm\sqrt{D_{Y}(x)}}{2\widehat{a}(x)}$ (resp. $X_{\pm}(y)=\frac{-b(y)\pm\sqrt{D_{X}(y)}}{2a(y)}$) defined by $R(x,Y(x))=0$ (resp. $R(X(y),y)=0$), where $D_{Y}(x)=\widehat{b}(x)^{2}-4\widehat{a}(x)\widehat{c}(x)$ (resp. $D_{X}(y)=b(y)^{2}-4a(y)c(y)$). The branch points of $Y_{\pm}(x)$ (resp. $X_{\pm}(y)$) are defined as the roots of $D_{Y}(x)=0$ (resp. $D_{X}(y)=0$).
\begin{lemma}
The algebraic function $Y(x)$, defined by $R(x,Y(x)) = 0$, has four real branch points, say $x_{1},x_{2},x_{3},x_{4}$, such that $x_{1},x_{2}$ lie inside the unit disc, and $x_{3},x_{4}$ lie outside the unit disc. Moreover, $D_{Y}(x)<0$, $x\in(x_{1},x_{2})\cup(x_{3},x_{4})$. 
Similarly, $X(y)$, defined by $R(X(y),y) = 0$, has also four real branch points, $y_{1},y_{2}$ lie inside the unit disc, and $y_{3},y_{4}$ outside the unit disc and $D_{X}(y)<0$, $y\in(y_{1},y_{2})\cup(y_{3},y_{4})$.
\end{lemma}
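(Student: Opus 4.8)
\emph{Proof plan.} The plan is to reduce everything to the real-rootedness and the localization of the quartic discriminant $D_{Y}(x)=\widehat b(x)^{2}-4\widehat a(x)\widehat c(x)$, following the classical kernel analysis of \cite{fay1,coh1}. Since $p_{-1,-1}=0$ we have $\widehat c(x)=-x(xp_{1,-1}+p_{0,-1})$, so $D_{Y}$ is a polynomial of degree at most $4$, its leading coefficient being $p_{1,0}^{2}-4p_{1,1}p_{1,-1}$; in the non-degenerate case this is positive and the degree is $4$, while a vanishing leading coefficient sends one branch point to infinity and drops the degree to $3$ (the picture below then persists with $x_{4}=\infty$). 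First I would record the values of $D_{Y}$ at the distinguished points. Because $\sum_{i,j}p_{i,j}=1$ and $p_{-1,-1}=0$ one has $\widehat a(1)+\widehat b(1)+\widehat c(1)=0$, hence $\widehat b(1)=-(\widehat a(1)+\widehat c(1))$ and $D_{Y}(1)=(\widehat a(1)-\widehat c(1))^{2}=E_{y}^{2}$, which is strictly positive under the standing ergodicity assumption $E_{y}<0$; similarly $D_{Y}(0)=\widehat b(0)^{2}=p_{-1,0}^{2}\geq 0$, and $D_{Y}(x)\to+\infty$ as $x\to\pm\infty$. These anchor points fix the sign chart of $D_{Y}$ on $\mathbb{R}$ once the zeros are located.

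The core step is to show that $D_{Y}$ has no zero on $\{|x|=1\}$ and that all four of its zeros are real. If $D_{Y}(x_{0})=0$ with $|x_{0}|=1$, then $R(x_{0},\cdot)=0$ has a double root $y_{0}$; but by Lemma \ref{lemm1}, for $|x_{0}|=1$ and $x_{0}\neq 1$ exactly one of the two $y$-roots of $R(x_{0},\cdot)$ has modulus $\leq 1$ while the other has modulus $>1$, so they cannot coincide, and $x_{0}=1$ is excluded since $D_{Y}(1)=E_{y}^{2}\neq 0$. Hence $D_{Y}$ never vanishes on the unit circle. The reality of all four zeros of the real polynomial $D_{Y}$ is then obtained as in \cite{fay1,coh1,fay}: a non-real conjugate pair would be incompatible with this modulus separation of the two branches $Y_{\pm}$ along $|x|=1$. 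This yields $x_{1}\leq x_{2}\leq x_{3}\leq x_{4}$ real, none on $|x|=1$.

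It remains to localize the four real zeros relative to the unit circle. Following the contour analysis of \cite{coh1} (adapted to the present situation $\Psi(0,0)=0$), one shows that the branch cut $[x_{1},x_{2}]$ lies in the interior of the smooth closed image contour $X_{0}(\{|y|=1\})$, which is itself contained in the open unit disc; hence $x_{1},x_{2}\in\{|x|<1\}$ and $D_{Y}<0$ on $(x_{1},x_{2})$. That there are no further branch points in $\{|x|\leq 1\}$ follows from an argument-principle count: writing $D_{Y}(x)=\widehat a(x)^{2}(Y_{+}(x)-Y_{-}(x))^{2}$ on $|x|=1$ and evaluating the winding number of $D_{Y}$ along the circle — using Lemma \ref{lemm1} to control the windings of $Y_{\pm}$ and the earlier kernel analysis for those of $\widehat a$ — one gets that $D_{Y}$ has exactly two zeros in $\{|x|<1\}$. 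Therefore $x_{3},x_{4}$ satisfy $|x_{3}|,|x_{4}|>1$, and reading off the signs from the anchor points of the first paragraph gives $D_{Y}<0$ precisely on $(x_{1},x_{2})\cup(x_{3},x_{4})$. The claim for the algebraic function $X(y)$ follows verbatim upon interchanging $(x,\widehat a,\widehat b,\widehat c)\leftrightarrow(y,a,b,c)$ and invoking the $|x|=1$ half of Lemma \ref{lemm1}; no new argument is needed.

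The part I expect to be hardest is the localization step — in particular the winding-number computation, or equivalently proving rigorously that the interior contour $X_{0}(\{|y|=1\})$ is enclosed by exactly two branch points rather than zero — together with making the branch-continuation argument for reality fully precise. A secondary chore is the bookkeeping of the degenerate sub-cases: $p_{-1,0}=0$ (then $x=0$ is itself a branch point, so $x_{1}=0$ at a boundary of the picture), a vanishing leading coefficient $p_{1,0}^{2}-4p_{1,1}p_{1,-1}$ (a branch point at $\infty$), and $E_{y}=0$ (excluded by ergodicity, but worth noting); each is handled by a limiting argument but should be stated explicitly.
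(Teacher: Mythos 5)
The paper itself gives no argument here: it simply defers to Lemma 2.3.8 of \cite{fay} and omits all details, so your proposal is in effect an attempt to reconstruct that classical proof. Your anchor computations are correct ($D_{Y}(1)=E_{y}^{2}$, $D_{Y}(0)=p_{-1,0}^{2}$, and the non-vanishing of $D_{Y}$ on $|x|=1$, $x\neq 1$, via the modulus separation of the two $y$-roots). But the reconstruction has a genuine hole at the decisive step: you never exhibit a single real point at which $D_{Y}<0$. All of your anchor values are nonnegative, and positivity at $0$, $1$ and $\pm\infty$ alone is consistent with $D_{Y}$ having no real zeros whatsoever. The two devices you propose in place of this do not close the gap. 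A winding-number count can at best show that $D_{Y}$ has exactly two zeros in $|x|<1$; it cannot show they are real, since a complex-conjugate pair strictly inside the disc is perfectly compatible with $D_{Y}\neq 0$ on $|x|=1$ and with the modulus separation of $Y_{\pm}$ there. Hence your one-line reality argument (``a non-real conjugate pair would be incompatible with the modulus separation'') fails as stated.

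The missing ingredient --- and the heart of the proof of the cited lemma --- is elementary: for $x>0$ one has $\widehat{a}(x)<0$ and $\widehat{c}(x)=-x(xp_{1,-1}+p_{0,-1})<0$, hence $D_{Y}(x)=\widehat{b}(x)^{2}-4\widehat{a}(x)\widehat{c}(x)=-4\widehat{a}(x)\widehat{c}(x)<0$ at any positive zero of $\widehat{b}$. Since $\widehat{b}(0)=-p_{-1,0}\leq 0$, $\widehat{b}(1)=1-p_{0,0}-p_{1,0}-p_{-1,0}>0$ and $\widehat{b}(x)\to-\infty$ as $x\to+\infty$, the polynomial $\widehat{b}$ has one zero in $[0,1)$ and one in $(1,\infty)$; applying the intermediate value theorem between these two points and the anchors $0$, $1$, $+\infty$ produces the four real branch points, their location relative to the unit circle, and the sign pattern $D_{Y}<0$ on $(x_{1},x_{2})\cup(x_{3},x_{4})$ all at once, with no conformal-mapping or winding argument needed. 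Two secondary issues: your claim that the leading coefficient $p_{1,0}^{2}-4p_{1,1}p_{1,-1}$ is positive ``in the non-degenerate case'' is unjustified and can fail for a general PH-NNRWQP with $p_{-1,-1}=0$ (the standard remedy in \cite{fay} is to let the fourth branch point pass through $\infty$ or become negative and to read the interval $(x_{3},x_{4})$ on the real projective line); and the degenerate subcases you flag ($p_{-1,0}=0$, which forces $x_{1}=0$, and a vanishing leading coefficient) do indeed require the limiting treatment you mention, so they should not be left as an afterthought.
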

\begin{proof}
The proof is based on Lemma 2.3.8, pp. 27-28, \cite{fay}, and further details are omitted.
\end{proof}

To ensure the continuity of the function two valued function $Y(x)$ (resp. $X(y)$) we consider the following cut planes: $\doubletilde{C}_{x}=\mathbb{C}_{x}-([x_{1},x_{2}]\cup[x_{3},x_{4}]$, $\doubletilde{C}_{y}=\mathbb{C}_{y}-([y_{1},y_{2}]\cup[y_{3},y_{4}]$, where $\mathbb{C}_{x}$, $\mathbb{C}_{y}$ the complex planes of $x$, $y$, respectively. Let also for $x\in\doubletilde{C}_{x}$
\begin{displaymath}
Y_{0}(x)=\left\{\begin{array}{rl}
Y_{-}(x),&{\text{if }}|Y_{-}(x)|\leq|Y_{+}(x)|,\\
Y_{+}(x),&{\text{if }}|Y_{-}(x)|>|Y_{+}(x)|,
\end{array}\right.\,\,Y_{1}(x)=\left\{\begin{array}{rl}
Y_{+}(x),&{\text{if }}|Y_{-}(x)|\leq|Y_{+}(x)|,\\
Y_{-}(x),&{\text{if }}|Y_{-}(x)|>|Y_{+}(x)|,
\end{array}\right.
\end{displaymath}
i.e., $Y_{0}(.)$ is the zero of $R(x,Y(x))$ with the smallest modulus. Similarly, we can define $X_{0}(y)$, $X_{1}(y)$ in $\doubletilde{C}_{y}$

In $\doubletilde{C}_{x}$ (resp. $\doubletilde{C}_{y}$), denote by $Y_{0}(x)$ (resp. $X_{0}(y)$) the zero of $R(x,Y(x))=0$ (resp. $R(X(y),y)=0$) with the smallest modulus, and $Y_{1}(x)$ (resp. $X_{1}(y)$) the other one. Define also the image contours, $\mathcal{L}=Y_{0}[\overrightarrow{\underleftarrow{x_{1},x_{2}}}]$, $\mathcal{M}=X_{0}[\overrightarrow{\underleftarrow{y_{1},y_{2}}}]$, where $[\overrightarrow{\underleftarrow{u,v}}]$ stands for the contour traversed from $u$ to $v$ along the upper edge of the slit $[u,v]$ and then back to $u$ along the lower edge of the slit. 
The following lemma shows that the mappings $Y(x)$, $X(y)$, for $x\in[x_{1},x_{2}]$, $y\in[y_{1},y_{2}]$ respectively, give rise to the smooth and closed contours $\mathcal{L}$, $\mathcal{M}$ respectively.
\begin{lemma}\label{lem1}\begin{enumerate}\item For $y\in[y_{1},y_{2}]$, the algebraic function $X(y)$ lies on a closed contour $\mathcal{M}$, which is symmetric with respect to the real line and written as a function of $Re(x)$, i.e.,
\begin{displaymath}
\begin{array}{l}
|x|^{2}=m(Re(x)),\,|x|^{2}\leq\frac{c(y_{2})}{a(y_{2})}.
\end{array}
\end{displaymath}
Set $\beta_{0}:=\sqrt{\frac{c(y_{2})}{a(y_{2})}}$, $\beta_{1}=-\sqrt{\frac{c(y_{1})}{a(y_{1})}}$ the extreme right and left point of $\mathcal{M}$, respectively.
\item For $x\in[x_{1},x_{2}]$, the algebraic function $Y(x)$ lies on a closed contour $\mathcal{L}$, which is symmetric with respect to the real line and written as a function of $Re(y)$ as,
\begin{displaymath}
\begin{array}{l}
|y|^{2}=v(Re(y)),\,|y|^{2}\leq\frac{\widehat{c}(x_{2})}{\widehat{a}(x_{2})}.
\end{array}
\end{displaymath}
Set $\eta_{0}:=\sqrt{\frac{\widehat{c}(x_{2})}{\widehat{a}(x_{2})}}$, $\eta_{1}=-\sqrt{\frac{\widehat{c}(x_{1})}{\widehat{a}(x_{1})}}$ the extreme right and left point of $\mathcal{L}$, respectively.
\end{enumerate}
\end{lemma}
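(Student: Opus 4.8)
The plan is to exploit that on the slit $[y_1,y_2]$ the discriminant $D_X(y)=b(y)^2-4a(y)c(y)$ is non-positive (by the preceding lemma it is strictly negative on the open interval and vanishes at the endpoints), so that the two roots $X_\pm(y)=\frac{-b(y)\pm\sqrt{D_X(y)}}{2a(y)}$ of $a(y)x^2+b(y)x+c(y)=0$ are complex conjugates of each other there. Vieta's formulas give
\[
X_+(y)+X_-(y)=-\frac{b(y)}{a(y)},\qquad X_+(y)X_-(y)=\frac{c(y)}{a(y)},
\]
and since $X_-(y)=\overline{X_+(y)}$ on $(y_1,y_2)$ this yields
\[
\operatorname{Re}X(y)=-\frac{b(y)}{2a(y)},\qquad |X(y)|^2=X_+(y)\overline{X_+(y)}=\frac{c(y)}{a(y)} .
\]
A short piece of bookkeeping shows $a(y)$ has no real zero on $[y_1,y_2]$ (the product of its roots is $p_{1,-1}/p_{1,1}>0$ and their sum is $-p_{1,0}/p_{1,1}\le 0$, so they are negative, and the degenerate cases are handled directly), hence both quantities above are finite and real, with $c(y)/a(y)>0$ because it equals $|X(y)|^2$.

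Next I would describe the contour. Traversing $[\overrightarrow{\underleftarrow{y_1,y_2}}]$ means running from $y_1$ to $y_2$ along the upper edge of the slit and back along the lower edge; across the two edges $\sqrt{D_X(y)}$ changes sign, so $X_0(y)$ — the root of smallest modulus, which by Lemma~\ref{lemm1} and its definition in $\doubletilde{C}_{y}$ is continuous, and analytic on the open slit since no branch point lies there — traces the upper half of a curve and then its reflection. Because $\operatorname{Re}X_0(y)$ and $|X_0(y)|^2$ depend only on $y$ and not on the chosen edge, the resulting curve $\mathcal{M}$ is symmetric with respect to the real axis; at $y=y_1,y_2$ one has $D_X=0$, so $X_+=X_-=-b(y_i)/(2a(y_i))$ is real, and these are precisely the two points where $\mathcal{M}$ meets $\mathbb{R}$. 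Evaluating $|x|^2=c(y_i)/a(y_i)$ at such a point forces $x=\pm\sqrt{c(y_i)/a(y_i)}$, which, once the signs of the real roots are pinned down, identifies the extreme right and left abscissae $\beta_0=\sqrt{c(y_2)/a(y_2)}$ and $\beta_1=-\sqrt{c(y_1)/a(y_1)}$.

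To put $\mathcal{M}$ in the announced form $|x|^2=m(\operatorname{Re}x)$, I would show that $\psi(y):=-b(y)/(2a(y))=\operatorname{Re}X(y)$ is strictly monotone on $[y_1,y_2]$; then $y=\psi^{-1}(\operatorname{Re}x)$ is well defined on $[\beta_1,\beta_0]$ and one sets $m:=(c/a)\circ\psi^{-1}$. The bound $|x|^2\le c(y_2)/a(y_2)$ then amounts to showing that $y\mapsto c(y)/a(y)$ attains its maximum over $[y_1,y_2]$ at $y_2$. Analyticity of $X_0$ on the open slit makes $\mathcal{M}$ an analytic arc away from the real axis, while the half-power behaviour of $\sqrt{D_X(y)}$ at $y_1,y_2$ forces $\mathcal{M}$ to cross the real line orthogonally, so $\mathcal{M}$ is a smooth closed contour. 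Part~2 is obtained verbatim by interchanging the roles of $x$ and $y$: one works with $\widehat{a}(x)y^2+\widehat{b}(x)y+\widehat{c}(x)=0$ on the slit $[x_1,x_2]$, so that $\operatorname{Re}Y(x)=-\widehat{b}(x)/(2\widehat{a}(x))$, $|Y(x)|^2=\widehat{c}(x)/\widehat{a}(x)$, and $\eta_0=\sqrt{\widehat{c}(x_2)/\widehat{a}(x_2)}$, $\eta_1=-\sqrt{\widehat{c}(x_1)/\widehat{a}(x_1)}$ appear as before.

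The main obstacle I expect is the monotonicity of $\psi(y)$ on the slit, together with the companion claim that $c(y)/a(y)$ is maximised at $y_2$: these are exactly what make the implicit description $|x|^2=m(\operatorname{Re}x)$ single-valued and supply the stated upper bound, and they require either a careful sign analysis of $\psi'(y)$ using $p_{i,j}\ge 0$, $\sum p_{i,j}=1$ and $\gamma<0$, or an appeal to the corresponding computation in \cite{fay}. A secondary technical point is verifying that $a(y)$ (resp. $\widehat{a}(x)$) does not vanish on $[y_1,y_2]$ (resp. $[x_1,x_2]$), which is needed for the Vieta ratios to remain finite along the whole slit.
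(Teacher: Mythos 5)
Your outline follows the same route as the paper's appendix proof: on the slit the discriminant is negative, so the two branches are complex conjugates, and Vieta's formulas give $Re\,Y(x)=-\widehat{b}(x)/(2\widehat{a}(x))$ and $|Y(x)|^{2}=\widehat{c}(x)/\widehat{a}(x)=:k(x)$; the representation of the contour then comes from inverting one of these two functions of the slit variable. The one real difference is which function gets inverted. You propose to invert the real part $\psi(y)=-b(y)/(2a(y))$ and then need two separate facts (strict monotonicity of $\psi$ on the slit, and maximality of $c/a$ at the right endpoint), both of which you explicitly leave as open obstacles. The paper instead inverts the modulus: it computes
\[
\frac{d}{dx}k(x)=\frac{x^{2}(p_{0,1}p_{1,-1}-p_{1,1}p_{0,-1})+2p_{1,-1}p_{-1,1}x+p_{-1,1}p_{0,-1}}{\widehat{a}(x)^{2}},
\]
argues the numerator is nonnegative on $(0,\infty)$ because $p_{0,1}p_{1,-1}-p_{1,1}p_{0,-1}>0$, concludes that $k$ is increasing so that $k(x)\leq k(x_{2})$ (which is exactly the stated bound), writes down the explicit one-valued inverse $\tilde{x}(y)$ of $|y|^{2}=k(x)$, and then substitutes into $Re(y)=-\widehat{b}(\tilde{x}(y))/(2\widehat{a}(\tilde{x}(y)))$ and solves for $|y|^{2}$ in terms of $Re(y)$. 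So the monotonicity you identify as the crux is precisely the step the paper carries out, and choosing $k$ rather than $\psi$ has the advantage of delivering the endpoint bound and the single-valuedness of $m$ in one stroke. As it stands, your proposal is a correct plan with the decisive computation missing; to close it along your own lines you would still have to establish the sign of $\psi'$ and separately that $c(y)/a(y)$ peaks at $y_{2}$, whereas differentiating $k$ does both jobs at once.

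One caveat that vindicates your hesitation: the paper's verification that $p_{0,1}p_{1,-1}-p_{1,1}p_{0,-1}>0$ is obtained by substituting the ALOHA parametrization of Section~\ref{appl} (the quantity collapses to $\lambda_{1}^{2}\lambda_{2}\bar{\lambda}_{2}a_{1}\bar{a}_{1}a_{2}\bar{a}_{2}$), not from the axioms $p_{i,j}\geq 0$, $\sum p_{i,j}=1$ alone; for arbitrary transition probabilities this determinant can be negative, so neither your sketch nor the paper's appendix settles the lemma in full generality. Your remaining points (non-vanishing of $a(y)$ on the slit, symmetry of $\mathcal{M}$ from the two edges of the slit, the real points of $\mathcal{M}$ occurring where the discriminant vanishes) are correct and consistent with what the paper uses implicitly.
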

\begin{proof}
See Appendix \ref{ap1}. 
\end{proof}
\subsection{Formulation and solution of a Riemann-Hilbert boundary value problem}

For $y\in \mathcal{D}_{y}=\{y\in\mathcal{C}:|y|\leq1,|X_{0}(y)|\leq1\}$,
\begin{equation}
A(X_{0}(y),y)g_{0}(X_{0}(y))+B(X_{0}(y),y)h_{0}(y)+C(X_{0}(y),y)=0.
\label{con}
\end{equation}
For $y\in \mathcal{D}_{y}-[y_{1},y_{2}]$ both $g(X_{0}(y))$, $h_{0}(y)$ are analytic and the right-hand side can be analytically continued up to the slit $[y_{1},y_{2}]$, or equivalently, for $x\in\mathcal{M}$,
\begin{equation}
A(x,Y_{0}(x))g_{0}(x)+B(x,Y_{0}(x))h_{0}(Y_{0}(x))+C(x,Y_{0}(x))=0.
\label{con1}
\end{equation}
Note that $g_{0}(x)$ is holomorphic in $D_{x}=\{x\in\mathbb{C}:|x|<1\}$, and continuous in $\bar{D}_{x}=\{x\in\mathbb{C}:|x|\leq1\}$. However, $g_{0}(x)$ may have poles in $S_{x}=G_{\mathcal{M}}\cap\bar{D}_{x}^{c}$, where $\bar{D}_{x}^{c}=\{x\in\mathbb{C}:|x|>1\}$, and $G_{\mathcal{U}}$ denotes the interior domain bounded by the contour $\mathcal{U}$. These poles (if exist) coincide with the zeros of $A(x,Y_{0}(x))$ in $S_{x}$. 

For $y\in[y_{1},y_{2}]$, let $X_{0}(y)=x\in\mathcal{M}$, and realize that $Y_{0}(X_{0}(y))=y$ \footnote{Without loss of generality we assume that $B(x,Y_{0}(x))\neq0$, $x\in\mathcal{M}$.}. Taking into account the (possible) poles of $g_{0}(x)$ (say, $\xi_{1}$,...,$\xi_{k}$), and noticing that $h_{0}(Y_{0}(x))$ is real for $x\in\mathcal{M}$ we conclude in,
\begin{equation}
Re(iU(x)f(x))=w(x),\,x\in\mathcal{M},
\label{bou}
\end{equation}
where,
\begin{displaymath}
\begin{array}{lcr}
U(x)=\frac{A(x,Y_{0}(x))}{\prod_{i=1}^{k}(x-\xi_{i})B(x,Y_{0}(x))},&f(x)=\prod_{i=1}^{k}(x-\xi_{i})g_{0}(x),&w(x)=Im(\frac{C(x,Y_{0}(x))}{B(x,Y_{0}(x))}).
\end{array}
\end{displaymath}

In order to solve (\ref{bou}), we must first conformally transform it from $\mathcal{M}$ to the unit circle $\mathcal{C}$. Let the mapping, $z=\gamma(x):G_{\mathcal{M}}\to G_{\mathcal{C}}$, and its inverse $x=\gamma_{0}(z):G_{\mathcal{C}}\to G_{\mathcal{M}}$. Then, we have the following problem: Find a function $\tilde{T}(z)=f(\gamma_{0}(z))$ regular for $z\in G_\mathcal{C}$, and continuous for $z\in\mathcal{C}\cup G_\mathcal{C}$ such that, 
\begin{equation}
Re(iU(\gamma_{0}(z))\tilde{T}(z))=w(\gamma_{0}(z)),\,z\in\mathcal{C}.
\end{equation}

To obtain the conformal mappings, we need to represent $\mathcal{M}$ in polar coordinates, i.e., $\mathcal{M}=\{x:x=\rho(\phi)\exp(i\phi),\phi\in[0,2\pi]\}.$ This procedure is described in detail in \cite{coh1}. We briefly summarized the basic steps: Since $0\in G_{\mathcal{M}}$, for each $x\in\mathcal{M}$, a relation between its absolute value and its real part is given by $|x|^{2}=m(Re(x))$ (see Lemma \ref{lem1}). Given the angle $\phi$ of some point on $\mathcal{M}$, the real part of this point, say $\delta(\phi)$, is the solution of $\delta-\cos(\phi)\sqrt{m(\delta)}$, $\phi\in[0,2\pi].$ Since $\mathcal{M}$ is a smooth, egg-shaped contour, the solution is unique. Clearly, $\rho(\phi)=\frac{\delta(\phi)}{\cos(\phi)}$, and the parametrization of $\mathcal{M}$ in polar coordinates is fully specified. Then, the mapping from $z\in G_{\mathcal{C}}$ to $x\in G_{\mathcal{M}}$, where $z = e^{i\phi}$ and $x= \rho(\tilde{\psi}(\phi))e^{i\tilde{\psi}(\phi)}$, satisfying $\gamma_{0}(0)=0$ and $\gamma_{0}(z)=\overline{\gamma_{0}(\overline{z})}$ is uniquely determined by (see \cite{coh1}, Section I.4.4),
\begin{equation}
\begin{array}{rl}
\gamma_{0}(z)=&z\exp[\frac{1}{2\pi}\int_{0}^{2\pi}\log\{\rho(\tilde{\psi}(\omega))\}\frac{e^{i\omega}+z}{e^{i\omega}-z}d\omega],\,|z|<1,\\
\tilde{\psi}(\phi)=&\phi-\int_{0}^{2\pi}\log\{\rho(\psi(\omega))\}\cot(\frac{\omega-\phi}{2})d\omega,\,0\leq\phi\leq 2\pi,
\end{array}
\label{zx}
\end{equation}
i.e., $\tilde{\psi}(.)$ is uniquely determined as the solution of a Theodorsen integral equation with $\tilde{\psi}(\phi)=2\pi-\psi(2\pi-\phi)$. Due to the correspondence-boundaries theorem, $\gamma_{0}(z)$ is continuous in $\mathcal{C}\cup G_{\mathcal{C}}$. 

The solution of the boundary value problem depends on its index $\chi=\frac{-1}{\pi}[arg\{U(x)\}]_{x\in \mathcal{M}}$, where $[arg\{U(x)\}]_{x\in \mathcal{M}}$, denotes the variation of the argument of the function $U(x)$ as $x$ moves along $\mathcal{M}$ in the positive direction, provided that $U(x)\neq0$, $x\in\mathcal{M}$.


If $\chi\leq 0$ our problem has at most one linearly independent solution. The solution of the problem defined in (\ref{bou}) is given for $z\in \mathcal{C}_{x}^{+}$ by,
\begin{equation}
\begin{array}{l}
g_{0}(\gamma_{0}(z))=\prod_{i=1}^{k}(\gamma_{0}(z)-\xi_{i})^{-1}e^{i\sigma(z)}z^{\chi}[iK+\frac{1}{2\pi i}\int_{|t|=1}e^{\omega_{1}(t)}\delta(t)\frac{t+z}{t-z}\frac{dt}{t}],
\end{array}
\label{sool1}
\end{equation}
where $K$ is a constant to be determined, $\omega_{1}(z)=Im(\sigma(z))$, $\delta(z)=\frac{w(\gamma_{0}(z))}{|U(\gamma_{0}(z))|}$, $U(\gamma_{0}(z))=b_{1}(z)+ia_{1}(z)$ and
\begin{displaymath}
\begin{array}{rl}
\sigma(z)=&\frac{1}{2\pi i}\int_{|t|=1}(\arctan\frac{b_{1}(t)}{a_{1}(t)}-\chi\arg t)\frac{t+z}{t-z}\frac{dt}{t}.
\end{array}
\end{displaymath}

Note that $g_{0}(x)=g_{0}(\gamma_{0}(\gamma(x)))$. When $\chi=0$, $K$ can be determined from the solution to $g_{0}(0)$. If $\chi<0$, then $K=0$ and a solution exists if \cite{ga}
\begin{displaymath}
\frac{1}{2\pi i}\int_{|t|=1}e^{\omega_{1}(t)}\delta(t)t^{-k-1}dt=0,
\end{displaymath}
for $k=0,1,...,-\chi-1.$ 

The following steps summarizes the way we can fully determine the stationary distribution: 
\begin{enumerate}
\item The $N_{1}\times N_{2}$ equations for $S_{0}$ involves $(N_{1}+1)\times(N_{2}+1)$ unknowns: $\pi(n_{1},n_{2})$ for $n_{1}=0,1,...,N_{1}$, $n_{2}=0,1,...,N_{2}$, excluding $\pi(N_{1},N_{2})$. Thus, we further need $N_{1}+N_{2}$ equations that involve the unknowns $\pi(N_{1},n_{2})$, $n_{2}=0,1,...,N_{2}-1$, and $\pi(n_{1},N_{2})$, $n_{1}=0,1,...,N_{1}-1$.
\item $g_{0}(x)$, $h_{0}(y)$, are expressed in terms of $A(x,y)$, $B(x,y)$ and $C(x,y)$, where the third one contains $N_{1}+N_{2}+1$ unknown probabilities, i.e., $\pi(N_{1},n_{2})$, $n_{2} =0, 1,...,N_{2} -1$ and $\pi(n_{1},N_{2})$, $n_{1} =0, 1,...,N_{1} -1$, and $\pi(N_{1},N_{2})$. These additional equations are now derived as follows in steps 3 and 4.
\item Use (\ref{r2}), (\ref{r3}) to express the unknown probabilities in terms of the derivatives of $g_{0}(x)$, $h_{0}(y)$ at point 0, i.e.,
\begin{equation}
\begin{array}{rl}
n_{2}!\pi(N_{1},n_{2})=&\frac{d^{n_{2}}}{dx^{n_{2}}}[e_{n_{2}}(x)g_{0}(x)+t_{n_{2}}(x)]|_{x=0},\,n_{2}=1,...,N_{2},\vspace{2mm}\\
n_{1}!\pi(n_{1},N_{2})=&\frac{d^{n_{1}}}{dy^{n_{1}}}[\tilde{e}_{n_{1}}(y)h_{0}(y)+\tilde{t}_{n_{1}}(y)]|_{y=0},\,n_{1}=1,...,N_{1},\end{array}\label{fc}
\end{equation}
where now, for $n_{2}=1,2,...,N_{2}$
\begin{displaymath}
\begin{array}{rl}
e_{n_{2}}(x)=&\frac{f_{2}(N_{1},n_{2}-1,x)e_{n_{2}-1}(x)-f_{1}(N_{1},n_{2}-2,x)e_{n_{2}-2}(x)}{p_{0,-1}(N_{1},n_{2})+xp_{1,-1}(N_{1},n_{2})},\vspace{2mm}\\
t_{n_{2}}(x)=&\frac{f_{2}(N_{1},n_{2}-1,x)t_{n_{2}-1}(x)-f_{1}(N_{1},n_{2}-2,x)t_{n_{2}-2}(x)-b_{n_{2}-1}(x)}{p_{0,-1}(N_{1},n_{2})+xp_{1,-1}(N_{1},n_{2})},
\end{array}
\end{displaymath}
and for $n_{1}=1,2,...,N_{1}$,
\begin{displaymath}
\begin{array}{rl}
\tilde{e}_{n_{1}}(y)=&\frac{\tilde{f}_{2}(n_{1}-1,N_{2},y)\tilde{e}_{n_{1}-1}(y)-\tilde{f}_{1}(n_{1}-2,N_{2},y)\tilde{e}_{n_{1}-2}(y)}{p_{-1,0}(n_{1},N_{2})+yp_{-1,1}(n_{1},N_{2})},\vspace{2mm}\\
\tilde{t}_{n_{1}}(y)=&\frac{\tilde{f}_{2}(n_{1}-1,N_{2},y)\tilde{t}_{n_{1}-1}(y)-\tilde{f}_{1}(n_{1}-2,N_{2},y)\tilde{t}_{n_{1}-2}(y)-u_{n_{1}-1}(y)}{p_{-1,0}(n_{1},N_{2})+yp_{-1,1}(n_{1},N_{2})}.
\end{array}
\end{displaymath}
This procedure will provide the $N_{1}+N_{2}$ equations referred at step $1$. 
\item The last additional equation for the determination of the last unknown $\pi(N_{1},N_{2})$ is done by the use of the normalization equation:
\begin{equation}
\begin{array}{c}
1=\sum_{n_{1}=0}^{N_{1}-1}\sum_{n_{2}=0}^{N_{2}-1}\pi(n_{1},n_{2})+\sum_{n_{1}=0}^{N_{1}-1}h_{n_{1}}(1)+\sum_{n_{2}=0}^{N_{2}-1}g_{n_{2}}(1)+g(1,1)
\end{array}
\label{nor}
\end{equation}
\end{enumerate}


\section{Application: An adaptive ALOHA-type random access network}\label{appl}
In the following we present an interesting application of PH-NNRWQP in the modelling of queue-aware multiple access systems. The analysis of such a system can be done following the lines of Section \ref{fayo}. 

Consider an ALOHA-type wireless network with two users communicating with a common destination node; see Figure \ref{fig1w}. Each user is equipped with an infinite capacity buffer for storing arriving and backlogged packets. The packet arrival processes are assumed to be independent from user to user and the channel is slotted in time, with a slot period to be equal the packet length. 

Let $Q_{k,m}$, $k=1,2,$ be the number of stored packets at the buffer of user $k$, at the beginning of the $m$th slot. Then $\mathbf{Q}_{m}=\{(Q_{1,m},Q_{2,m}),m=0,1,...\}$ is a two-dimensional discrete time Markov chain with state space $S=\{\underline{n}=(n_{1},n_{2});n_{k}\geq0,k=1,2\}$.
\paragraph{Transmission control:} At the beginning of each slot, given that the sate of the network is $\underline{n}$, user node $k$, $k=1,2$ transmits a packet to the destination node with probability $a_{k}(\underline{n})$ (with prob. $\bar{a}_{k}(\underline{n})$ remains silent). If both user nodes transmit at the same slot there is a collision, and both packets have to be retransmitted in a later slot. Packet arrivals are assumed i.i.d. random variables from slot to slot, both depended on the state of the network at the beginning of a slot. 

Let $A_{k,m}(\underline{n})$ the number of packets that arrive at $(m,m+1]$ given that at the beginning of the $m$th slot the state of the network is $\underline{n}$. We assume Bernoulli arrivals with the average number of arrivals being $\mathbb{E}(A_{k,m}(\underline{n}))=\lambda_{k}(\underline{n})<\infty$ packets per slot.
\begin{figure}[htp]
\centering
\includegraphics[scale=0.55]{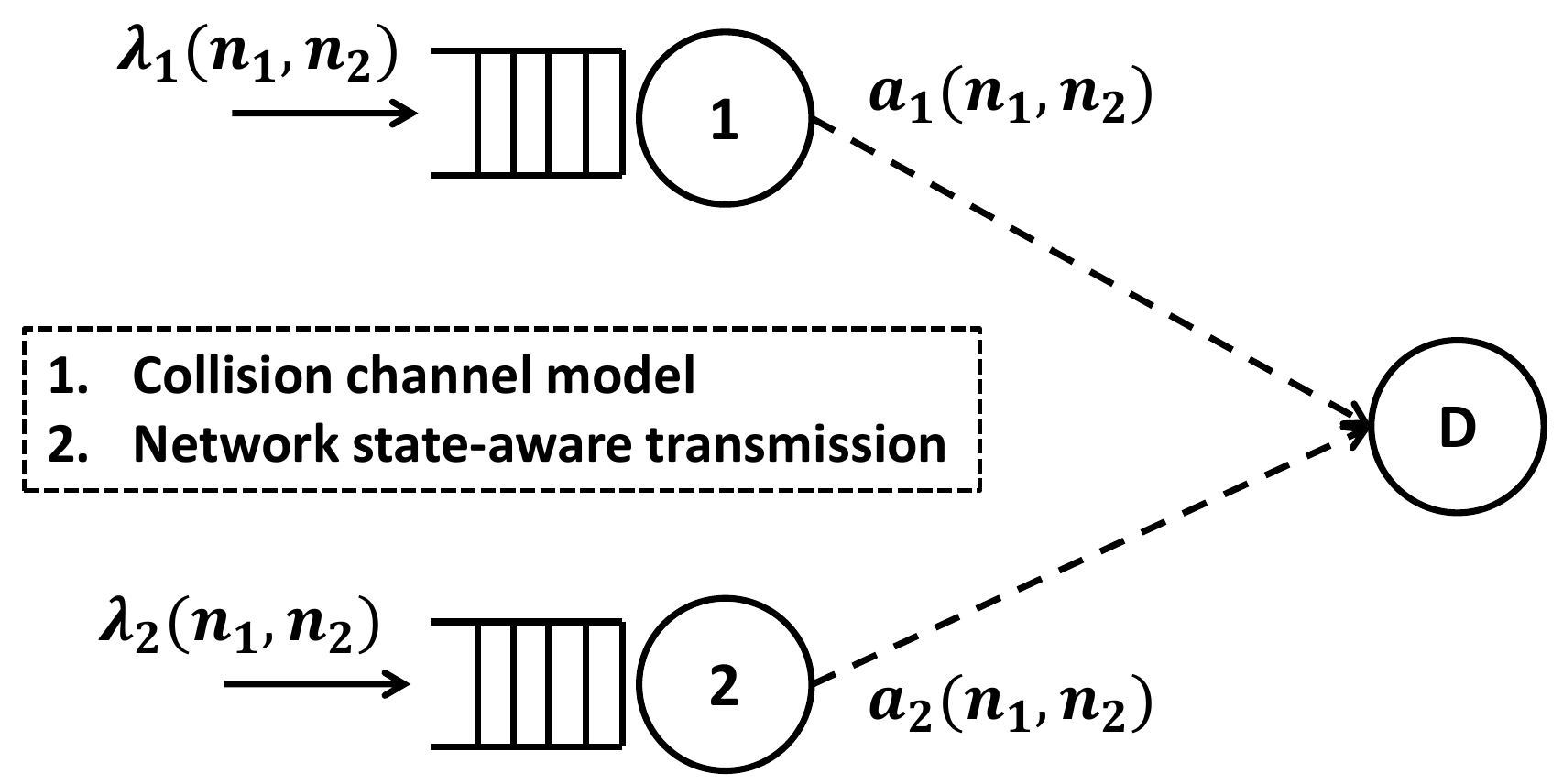}
\caption{The adaptive ALOHA-type network.}
\label{fig1w}
\end{figure}

We consider a \textit{limited-state dependent} queue-based transmission protocol. In particular, we assume that there exist two positive constants, say $N_{1}$, $N_{2}$, such that they split the state space $S$ in four non-intersecting subsets
\begin{displaymath}
\begin{array}{rl}
S_{0}=\{(n_{1},n_{2});n_{1}<N_{1},n_{2}<N_{2}\},&S_{1}=\{(n_{1},n_{2});n_{1}\geq N_{1},n_{2}<N_{2}\},\\
S_{2}=\{(n_{1},n_{2});n_{1}<N_{1},n_{2}\geq N_{2}\},&S_{3}=\{(n_{1},n_{2});n_{1}\geq N_{1},n_{2}\geq N_{2}\},
\end{array}
\end{displaymath}
and assume that for $k=1,2,$
\begin{displaymath}
a_{k}(\underline{n})=\begin{cases}
\begin{array}{rl}
a_{k}(N_{1},n_{2}),&if\,\underline{n}\in S_{1},\\
a_{k}(n_{1},N_{2}),&if\,\underline{n}\in S_{2},\\
a_{k}(N_{1},N_{2}),&if\,\underline{n}\in S_{3},\\
\end{array}\end{cases}\,\,\lambda_{k}(\underline{n})=\begin{cases}
\begin{array}{rl}
\lambda_{k}(N_{1},n_{2}),&if\,\underline{n}\in S_{1},\\
\lambda_{k}(n_{1},N_{2}),&if\,\underline{n}\in S_{2},\\
\lambda_{k}(N_{1},N_{2}),&if\,\underline{n}\in S_{3}.\\
\end{array}\end{cases}
\end{displaymath}
 The one step transition probabilities from $\underline{n}=(n_{1},n_{2})$ to $(n_{1}+i,n_{2},+j)$, say $p_{i,j}(\underline{n})$, where, $\underline{n}\in S$, $i,j=-1,0,1$, are given by:
\begin{displaymath}
\begin{array}{rl}
p_{1,0}(\underline{n})=&(\bar{a}_{1}(\underline{n})\bar{a}_{2}(\underline{n})+a_{1}(\underline{n})a_{2}(\underline{n}))d_{1,0}(\underline{n})+\bar{a}_{1}(\underline{n})a_{2}(\underline{n})d_{1,1}(\underline{n}),\\
p_{0,1}(\underline{n})=&(\bar{a}_{1}(\underline{n})\bar{a}_{2}(\underline{n})+a_{1}(\underline{n})a_{2}(\underline{n}))d_{0,1}(\underline{n})+\bar{a}_{2}(\underline{n})a_{1}(\underline{n})d_{1,1}(\underline{n}),\\
p_{1,1}(\underline{n})=&(\bar{a}_{1}(\underline{n})\bar{a}_{2}(\underline{n})+a_{1}(\underline{n})a_{2}(\underline{n}))d_{1,1}(\underline{n}),\\
p_{-1,1}(\underline{n})=&a_{1}(\underline{n})\bar{a}_{2}(\underline{n})d_{0,1}(\underline{n}),\\
p_{1,-1}(\underline{n})=&a_{2}(\underline{n})\bar{a}_{1}(\underline{n})d_{1,0}(\underline{n}),\\
p_{-1,0}(\underline{n})=&a_{1}(\underline{n})\bar{a}_{2}(\underline{n})d_{0,0}(\underline{n}),\\
p_{0,-1}(\underline{n})=&a_{2}(\underline{n})\bar{a}_{1}(\underline{n})d_{0,0}(\underline{n}),\\
p_{0,0}(\underline{n})=&(\bar{a}_{1}(\underline{n})\bar{a}_{2}(\underline{n})+a_{1}(\underline{n})a_{2}(\underline{n}))d_{0,0}(\underline{n})+\bar{a}_{1}(\underline{n})a_{2}(\underline{n})d_{0,1}(\underline{n})\\&+\bar{a}_{2}(\underline{n})a_{1}(\underline{n})d_{1,0}(\underline{n}),
\end{array}
\end{displaymath}
where 
\begin{displaymath}
d_{i,j}(\underline{n})=\left\{\begin{array}{rl}
\lambda_{1}(\underline{n})\bar{\lambda}_{2}(\underline{n}),&i=1,j=0,\\
\lambda_{2}(\underline{n})\bar{\lambda}_{1}(\underline{n}),&i=0,j=1,\\
\lambda_{1}(\underline{n})\lambda_{2}(\underline{n}),&i=1,j=1,\\
\bar{\lambda}_{1}(\underline{n})\bar{\lambda}_{2}(\underline{n}),&i=0,j=0.
\end{array}\right.
\end{displaymath}
and $\bar{\lambda}_{k}(\underline{n})=1-\lambda_{k}(\underline{n})$, $k=1,2$, $a_{1}(0,n_{2})=0=a_{2}(n_{1},0)$.

For $(n_{1},n_{2})\in S_{3}$, $a_{k}(n_{1},n_{2})=a_{k}(N_{1},N_{2}):=a_{k}$, $\lambda_{k}(n_{1},n_{2})=\lambda_{k}(N_{1},N_{2}):=\lambda_{k}$, $k=1,2,$ and equation (\ref{fun}) takes the following form
\begin{equation}
\begin{array}{rl}
R(x,y)=&xy-D(x,y)[xy+a_{1}\bar{a}_{2}y(1-x)+\bar{a}_{1}a_{2}x(1-y)],\\
D(x,y)=&(\bar{\lambda}_{1}+\lambda_{1}x)(\bar{\lambda}_{2}+\lambda_{2}y).
\end{array}
\end{equation}
The rest of the analysis follows the lines of Section \ref{fayo}.
\subsection{Ergodicity conditions}\label{sta}
Note that our model is described by a two-dimensional Markov with limited state dependency, or equivalently with partial spatial homogeneity. The ergodicity conditions for the model at hand reads as follows.

For $Q_{1,m}>N_{1}$ (resp. $Q_{2,m}>N_{2}$) the component $Q_{2,m}$ (resp. $Q_{1,m}$) evolves as a one-dimensional RW. Denote its corresponding stationary distribution by $\psi:=(\psi_{0},\psi_{1},...)$ (resp. $\varphi:=(\varphi_{0},\varphi_{1},...)$); see Appendix \ref{apc} for details on the corresponding induced Markov chains. Consider now the mean drifts
\begin{displaymath}
\begin{array}{rl}
\gamma_{n_{2}}:=&\mathbb{E}(Q_{1,m+1}-Q_{1,m}|\mathbf{Q}_{m}=(n_{1},n_{2}))\\&=\lambda_{1}(N_{1},n_{2})-a_{1}(N_{1},n_{2})\bar{a}_{2}(N_{1},n_{2}),\,\forall n_{1}\geq N_{1},\vspace{2mm}\\
\delta_{n_{1}}:=&\mathbb{E}(Q_{2,m+1}-Q_{2,m}|\mathbf{Q}_{m}=(n_{1},n_{2}))=\\&=\lambda_{2}(n_{1},N_{2})-a_{2}(n_{1},N_{2})\bar{a}_{1}(n_{1},N_{2}),\,\forall n_{2}\geq N_{2}.
\end{array}
\end{displaymath}
Since $a_{k}(\underline{n}):=a_{k}$, $\lambda_{k}(\underline{n})=\lambda_{k}$, $k=1,2,$ for $\underline{n}\in S_{3}=\{(n_{1},n_{2}):n_{1}\geq N_{1},n_{2}\geq N_{2}\}$,
\begin{displaymath}
\begin{array}{rl}
\gamma_{n_{2}}:=&\gamma=\lambda_{1}-a_{1}\bar{a}_{2},\,n_{2}\geq N_{2},\\
\delta_{n_{1}}:=&\delta=\lambda_{2}-a_{2}\bar{a}_{1},\,n_{1}\geq N_{1}.
\end{array}
\end{displaymath}
Then, the following theorem provides necessary and sufficient conditions for ergodicity \cite{fay2}. For a similar approach, see \cite{za}\footnote{Note also that for $N_{1}=N_{2}=1$, Theorem \ref{ergth} coincides with the well known ergodicity result presented in Theorem 3.3.1 in \cite{fay}}.
\begin{theorem}\label{ergth}
\begin{enumerate}
\item If $\lambda_{1}<a_{1}\bar{a}_{2}$, $\lambda_{2}<a_{2}\bar{a}_{1}$, $\mathbf{Q}_{m}$ is 
\begin{enumerate}
\item ergodic if
\begin{equation}
\begin{array}{rl}
\lambda_{1}(1-\sum_{k=0}^{N_{2}-1}\psi_{k})<&a_{1}\bar{a}_{2}(1-\sum_{k=0}^{N_{2}-1}\psi_{k})-\sum_{k=0}^{N_{2}-1}\gamma_{k}\psi_{k},\,and\vspace{2mm}\\
\lambda_{2}(1-\sum_{k=0}^{N_{1}-1}\varphi_{k})<&a_{2}\bar{a}_{1}(1-\sum_{k=0}^{N_{1}-1}\varphi_{k})-\sum_{k=0}^{N_{1}-1}\delta_{k}\varphi_{k}.
\end{array}
\end{equation}
\item transient if
\begin{equation}
\begin{array}{rl}
\lambda_{1}(1-\sum_{k=0}^{N_{2}-1}\psi_{k})>&a_{1}\bar{a}_{2}(1-\sum_{k=0}^{N_{2}-1}\psi_{k})-\sum_{k=0}^{N_{2}-1}\gamma_{k}\psi_{k},\,or\vspace{2mm}\\
\lambda_{2}(1-\sum_{k=0}^{N_{1}-1}\varphi_{k})>&a_{2}\bar{a}_{1}(1-\sum_{k=0}^{N_{1}-1}\varphi_{k})-\sum_{k=0}^{N_{1}-1}\delta_{k}\varphi_{k}.
\end{array}
\end{equation}
\end{enumerate}
\item If $\lambda_{1}\geq a_{1}\bar{a}_{2}$, $\lambda_{2}<a_{2}\bar{a}_{1}$, $\mathbf{Q}_{m}$ is
\begin{enumerate}
\item ergodic if
\begin{equation*}
\begin{array}{rl}
\lambda_{1}(1-\sum_{k=0}^{N_{2}-1}\psi_{k})<&a_{1}\bar{a}_{2}(1-\sum_{k=0}^{N_{2}-1}\psi_{k})-\sum_{k=0}^{N_{2}-1}\gamma_{k}\psi_{k}.
\end{array}
\end{equation*}
\item transient if 
\begin{equation*}
\begin{array}{rl}
\lambda_{1}(1-\sum_{k=0}^{N_{2}-1}\psi_{k})>&a_{1}\bar{a}_{2}(1-\sum_{k=0}^{N_{2}-1}\psi_{k})-\sum_{k=0}^{N_{2}-1}\gamma_{k}\psi_{k},
\end{array}
\end{equation*}
or when $\lambda_{1}> a_{1}\bar{a}_{2}$ and $\lambda_{1}(1-\sum_{k=0}^{N_{2}-1}\psi_{k})=a_{1}\bar{a}_{2}(1-\sum_{k=0}^{N_{2}-1}\psi_{k})-\sum_{k=0}^{N_{2}-1}\gamma_{k}\psi_{k}$.
\end{enumerate}

\item If $\lambda_{1}< a_{1}\bar{a}_{2}$, $\lambda_{2}\geq a_{2}\bar{a}_{1}$, $\mathbf{Q}_{m}$ is
\begin{enumerate}
\item  ergodic if 
\begin{equation*}
\begin{array}{rl}
\lambda_{2}(1-\sum_{k=0}^{N_{1}-1}\varphi_{k})<&a_{2}\bar{a}_{1}(1-\sum_{k=0}^{N_{1}-1}\varphi_{k})-\sum_{k=0}^{N_{1}-1}\delta_{k}\varphi_{k}.
\end{array}
\end{equation*}
\item transient if
\begin{equation*}
\begin{array}{rl}
\lambda_{2}(1-\sum_{k=0}^{N_{1}-1}\varphi_{k})>&a_{2}\bar{a}_{1}(1-\sum_{k=0}^{N_{1}-1}\varphi_{k})-\sum_{k=0}^{N_{1}-1}\delta_{k}\varphi_{k},
\end{array}
\end{equation*}
or when $\lambda_{2}>a_{2}\bar{a}_{1}$ and $\lambda_{2}(1-\sum_{k=0}^{N_{1}-1}\varphi_{k})=a_{2}\bar{a}_{1}(1-\sum_{k=0}^{N_{1}-1}\varphi_{k})-\sum_{k=0}^{N_{1}-1}\delta_{k}\varphi_{k}$.
\end{enumerate}
\item If $\lambda_{1}\geq a_{1}\bar{a}_{2}$, $\lambda_{2}\geq a_{2}\bar{a}_{1}$, $\mathbf{Q}_{m}$ is transient.
\end{enumerate}
\end{theorem}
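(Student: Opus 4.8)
The plan is to apply the general ergodicity criterion for partially homogeneous random walks in the quarter plane (the result of \cite{fay2}, see also \cite{za}) to the specific transition structure of the adaptive ALOHA network, and to translate the abstract induced-chain drift conditions into the explicit inequalities stated. The key observation is that the network is genuinely homogeneous only on $S_3$, and that on the ``strips'' adjacent to the axes the drifts are position dependent; the criterion of \cite{fay2} handles exactly this situation by first looking at the sign of the \emph{asymptotic} drifts $(\gamma,\delta)$ in $S_3$, and, when these are negative, refining via the stationary distributions $\psi$, $\varphi$ of the one-dimensional chains induced on the boundary strips.

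First I would recall from Appendix \ref{apc} the structure of the two induced Markov chains. For $Q_{1,m}>N_1$ the second coordinate $Q_{2,m}$, together with the fact that $a_k,\lambda_k$ then depend only on $n_2$ through the rule (\ref{trans}), is a one-dimensional nearest-neighbour random walk on $\mathbb{N}_0$ which is positive recurrent precisely when its asymptotic drift $\gamma=\lambda_2-a_2\bar a_1$... wait — here the relevant induced chain along the $S_1$ strip is in the $n_2$ variable with up/down rates determined by $p_{0,1}(N_1,n_2)$, $p_{0,-1}(N_1,n_2)$, and analogously for $\varphi$ along $S_2$; its stationary vector is $\psi$. One must check that $\psi$, $\varphi$ exist, i.e.\ that these one-dimensional chains are ergodic: this follows from the case hypotheses $\lambda_1<a_1\bar a_2$, $\lambda_2<a_2\bar a_1$ (which force the far-field drift of each induced chain to be negative) in cases 1--3, and it is exactly at the point where one of these fails that the classification changes. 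Once $\psi$, $\varphi$ are in hand, the criterion of \cite{fay2} says: the walk is ergodic iff the ``effective'' drift of $Q_1$, averaged against $\psi$ over the finitely many inhomogeneous levels $n_2=0,\dots,N_2-1$ and with the homogeneous value $\lambda_1-a_1\bar a_2$ on the remaining mass $1-\sum_{k=0}^{N_2-1}\psi_k$, is negative, and symmetrically for $Q_2$ against $\varphi$. Writing this out and separating the homogeneous from the inhomogeneous contributions produces exactly the displayed inequalities
$\lambda_1(1-\sum_{k=0}^{N_2-1}\psi_k)<a_1\bar a_2(1-\sum_{k=0}^{N_2-1}\psi_k)-\sum_{k=0}^{N_2-1}\gamma_k\psi_k$
and its mirror, with transience under the reversed strict inequality.

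Next I would dispose of the mixed cases 2 and 3 and the fully unstable case 4. In case 4, $\lambda_1\ge a_1\bar a_2$ and $\lambda_2\ge a_2\bar a_1$ mean both homogeneous drifts in $S_3$ are non-negative, so the walk escapes to infinity along both directions and \cite{fay2} gives transience directly (Foster--Lyapunov with a linear function, or the corresponding statement in \cite{fay,za}). In cases 2 and 3, exactly one far-field drift is non-negative; say $\lambda_1\ge a_1\bar a_2$. Then $Q_1$ drifts to infinity in the homogeneous region, so the only induced chain that is positive recurrent is the one in the $Q_2$ direction along $S_2$ governed by $\varphi$... again I should be careful: when $Q_1\to\infty$ the \emph{other} coordinate $Q_2$ is the one whose stationary behaviour controls stability, and its induced chain along the strip $n_1\ge N_1$ (i.e.\ in $S_3\cup S_1$) is driven by the $\psi$-chain — so the single surviving condition is the $\psi$-inequality, with the boundary transient case $\lambda_1>a_1\bar a_2$ together with equality handled as a degenerate ``null'' situation that \cite{fay2} classifies as transient. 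The symmetric statement gives case 3 with $\varphi$.

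The main obstacle I anticipate is bookkeeping of \emph{which} induced chain and \emph{which} stationary vector governs \emph{which} coordinate's stability, together with verifying the hypotheses under which $\psi$ and $\varphi$ are well defined — the far-field negative drift of the induced one-dimensional walks must be checked, and this is precisely where the case split $\lambda_1 \lessgtr a_1\bar a_2$, $\lambda_2\lessgtr a_2\bar a_1$ originates. The actual computation of $\gamma_k$, $\delta_k$ from the $p_{i,j}(\underline n)$ is routine: from the displayed transition probabilities, $\mathbb{E}(Q_{1,m+1}-Q_{1,m}\mid \mathbf Q_m=(n_1,n_2)) = p_{1,0}+p_{1,1}+p_{1,-1}-p_{-1,0}-p_{-1,1}$, which collapses to $\lambda_1(N_1,n_2)-a_1(N_1,n_2)\bar a_2(N_1,n_2)$ after the $d_{i,j}$ and $a_k$ factors cancel, matching the formula in the statement; likewise for $\delta_{n_1}$. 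Apart from that cancellation and the correct invocation of \cite{fay2} (respectively \cite{za}), and the remark that for $N_1=N_2=1$ the strips are empty so all sums vanish and one recovers Theorem 3.3.1 of \cite{fay}, the argument is a direct specialization and no further delicate analysis is needed.
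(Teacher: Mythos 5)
Your proposal takes essentially the same route as the paper: the paper's entire proof is a one-line appeal to the quadratic-Lyapunov-function ergodicity criterion of \cite[Theorem 3.1, p.~178]{fay2} (cf.\ \cite{za}), which is precisely the induced-chain, averaged-drift criterion you invoke and then specialize to the ALOHA transition probabilities. Your additional bookkeeping---identifying $\psi$ and $\varphi$ as the stationary vectors of the induced one-dimensional walks on the strips $n_{1}\geq N_{1}$ and $n_{2}\geq N_{2}$, checking when they are well defined, and splitting the averaged drift into its homogeneous and inhomogeneous parts---is exactly what the paper delegates to that citation and to Appendix \ref{apc}.
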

\begin{proof}
The proof is based on the construction of quadratic Lyapunov functions following the lines in \cite[Theorem 3.1, p. 178]{fay2}. 
\end{proof}

\section{Numerical example}\label{num}
For the numerical illustration, we focus on the application model developed in Section \ref{appl}, by considering an adaptive slotted Aloha network of two users with collisions, which is described by a PH-NNRWQP with no transitions to the South-West; see Section \ref{fayo}. 

\paragraph{Queueing analysis:} As we have seen so far, in order to provide the exact information about the stationary joint queue length distribution at users' queue we have firstly to solve a system of $(N_{1}+1)\times(N_{2}+1)$ linear equations.
\begin{enumerate}
\item  $N_{1}\times N_{2}$ of them refer to the states in region $S_{0}$.
\item $N_{1}+N_{2}$ refer to the equations that correspond to the derivatives
\begin{displaymath}
\begin{array}{rl}
n_{2}!\pi_{N_{1},n_{2}}=\frac{d^{n_{2}}}{dx^{n_{2}}}[e_{n_{2}}(x)g_{0}(x)+t_{n_{2}}(x)]|_{x=0},\,n_{2}=1,...,N_{2},\vspace{2mm}\\
n_{1}!\pi_{n_{1},N_{2}}=\frac{d^{n_{1}}}{dy^{n_{1}}}[\tilde{e}_{n_{1}}(y)h_{0}(y)+\tilde{t}_{n_{1}}(y)|_{y=0},\,n_{1}=1,...,N_{1}.
\end{array}
\end{displaymath}
\item The normalizing equation (\ref{nor}). Moreover, note that each coefficient in the last $N_{1}+N_{2}+1$ equations requires the evaluation of complex integrals of type (\ref{sool1}). In order to numerically evaluate them, we have firstly to construct the conformal mappings. Note that in most of the cases we are not be able to obtain them explicitly. However, an efficient numerical approach was developed in \cite{coh1}, Sec. IV.1.1. Alternatively, since contours are close to ellipses, we can use the nearly circular approximation, \cite{neh}. Function $U(x)$ on which (\ref{bou}) is based, involves determinants of matrices whose elements are polynomials.
\item Solve the functional equation (\ref{fun}).
\end{enumerate} 

In the following we provide a simple numerical example to illustrate our theoretical findings. For ease of computations we focus on the symmetrical system: Set $N_{1}=N_{2}=2$, and for let $||n||=n_{1}+n_{2}$, $a_{k}(\underline{n}):=a(\underline{n})=a\frac{n_{k}}{||n||}$, $\lambda_{k}(\underline{n}):=\lambda(\underline{n})=\lambda 2^{-||n||}$, with $a(\underline{n})=a$, and $\lambda(\underline{n})=\lambda$, for $\underline{n}\in S_{3}$.

In particular, in Figure \ref{fig2}, the total expected number of buffered packets is presented as a function of $\lambda$, $a$. Definitely, by increasing $a$, the delay in queue can be handled as long as $\lambda$ remains in small values. However, we can see there is no significant benefit. This is because by increasing $a$, we also increase the possibility of a collision, which result in unsuccessful transmission. However, by increasing $\lambda$, we observe the increase on the total expected number of buffered packets, as expected.

\begin{figure}
\centering
\includegraphics[scale=0.65]{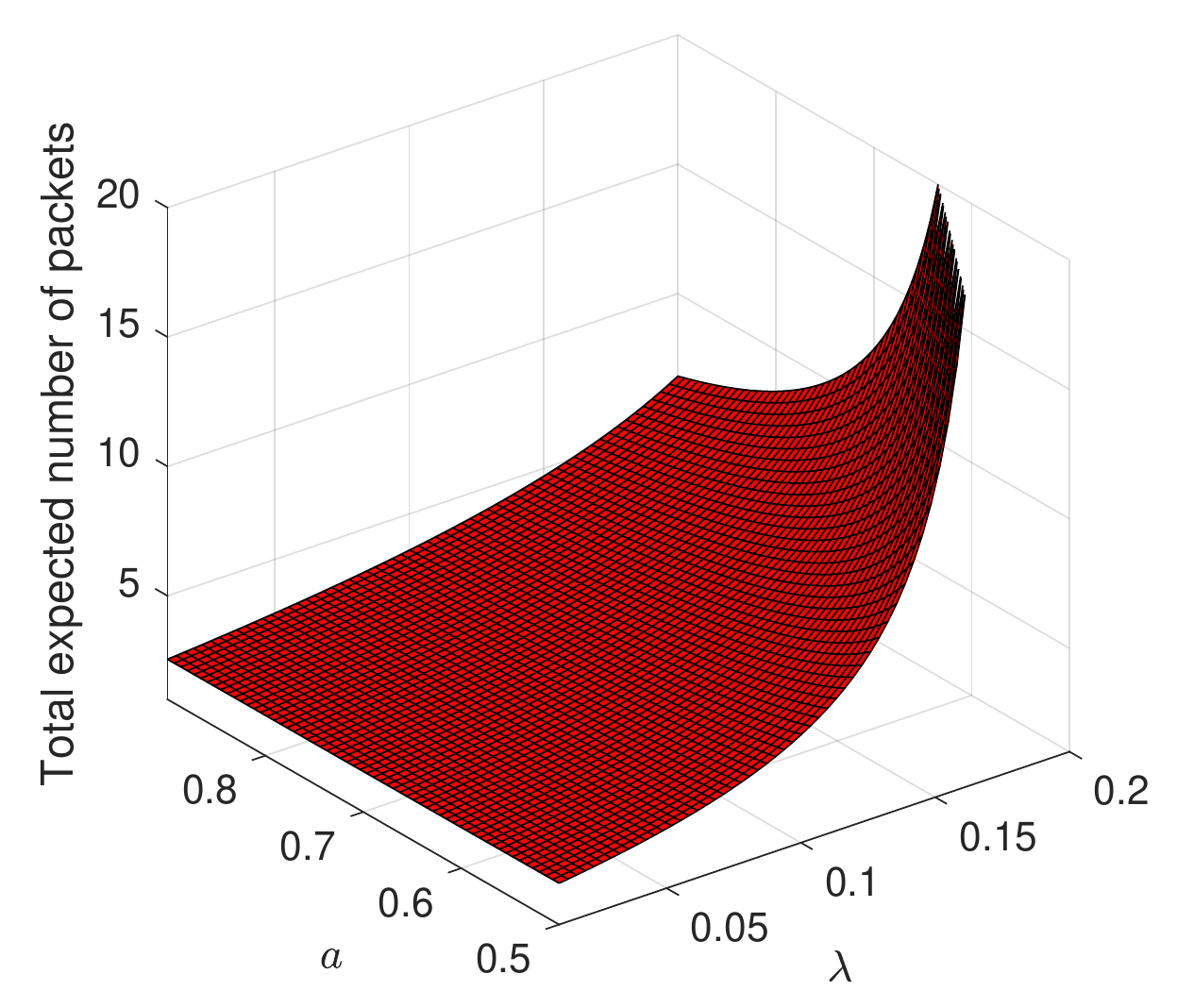}
\caption{Effect of $\lambda$, $a$ on the  total expected number of buffered packets.}\label{fig2}
\end{figure}

\paragraph{Stability condition:} Set $N_{1}=N_{2}=2$, $\lambda_{2}(\underline{n})=0.5^{n_{2}}$, $n_{1}=0,1$, $n_{2}=0,1,2$, and $a_{1}(\underline{n})=0.8\frac{n_{1}}{n_{1}+n_{2}}$, $a_{2}(\underline{n})=0.6\frac{n_{2}}{n_{1}+n_{2}}$. Let also
\begin{equation*}
\begin{array}{rl}
h_{1}:=&\lambda_{1}(2,2)(1-\sum_{k=0}^{1}\psi_{k})-a_{1}(2,2)\bar{a}_{2}(2,2)(1-\sum_{k=0}^{1}\psi_{k})+\sum_{k=0}^{1}\gamma_{k}\psi_{k},\vspace{2mm}\\
h_{2}:=&\lambda_{2}(2,2)(1-\sum_{k=0}^{1}\phi_{k})-a_{2}(2,2)\bar{a}_{1}(2,2)(1-\sum_{k=0}^{1}\phi_{k})+\sum_{k=0}^{1}\delta_{k}\phi_{k}.
\end{array}
\end{equation*}
Recall that $a_{1}(\underline{n})=a_{1}(2,2)$, $a_{2}(\underline{n})=a_{2}(2,2)$ for $\underline{n}\in S_{3}$.

In Figure \ref{stabi} we observe how the stability region is affected by varying $\lambda_{1}(\underline{n})$. In particular, when $\lambda_{1}(\underline{n})=0.2^{n_{1}}$, for $n_{1}=0,1,2$, $n_{2}=0,1$, the stability region is given by the triangular $ABC$. Note that the smaller value of  $\lambda_{1}(\underline{n})$ with respect to $\lambda_{2}(\underline{n})$, allows $\lambda_{1}(2,2)$ to take relatively large values with respect to $\lambda_{2}(2,2)$. When, we set $\lambda_{1}(\underline{n})=0.8^{n_{1}}$, the stability region becomes the triangular $DEF$, which seems to be more fair for $\lambda_{1}(2,2)$, $\lambda_{2}(2,2)$.
\begin{figure}
\centering
\includegraphics[scale=0.5]{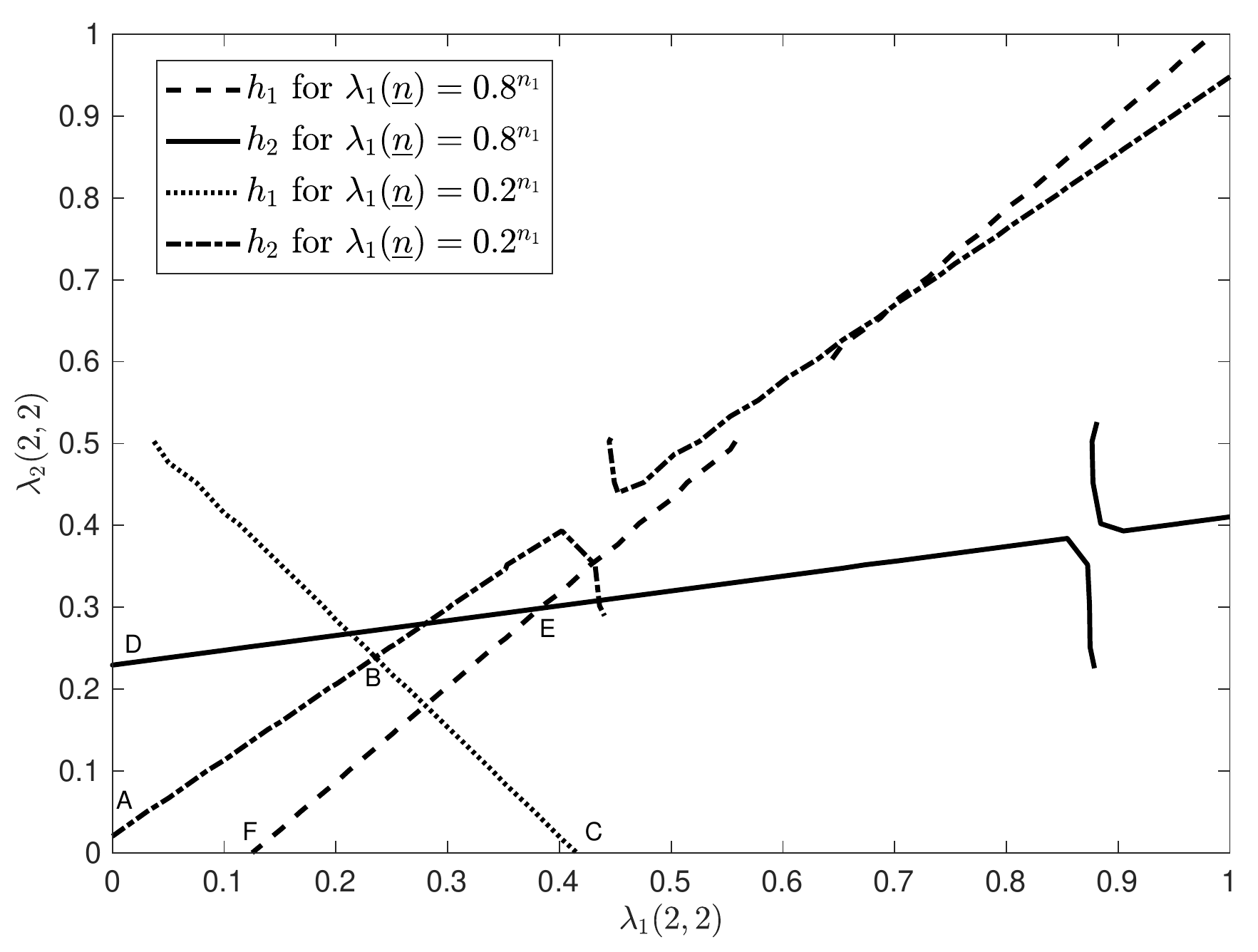}
\caption{The effect of $\lambda_{1}(\underline{n})$ on the stability region.}\label{stabi}
\end{figure}

\section{Conclusion}
In this work we provided an analytical approach to analyse the stationary behaviour of a partially homogeneous nearest-neighbour random walk in the quarter plane. We show that its stationary distribution is investigated by solving a functional equation using the theory of Riemann (-Hilbert) boundary value problem, along with a finite system of linear equations.

This class of random walks can be used to model plenty of practical applications including queue-aware multiple access systems. In such class of random access networks, intelligent nodes adapt their operational characteristics based on the status of the network. In a future work we plan to further investigate the numerical implementation of the approach as well as to compare it with other well known numerical oriented approaches such as the power series algorithm \cite{blanc1987numerical}.
\appendix
\section{Proofs for the case $\Psi(0,0)>0$}\label{nnrw}
\subsection{Proof of Theorem \ref{th1}}
It is readily seen that $\Psi(gs,gs^{-1})$ is for every fixed $|s|=1$ regular in $|g|<1$, continuous in $|g|\leq1$, and for $|g|=1$:
\begin{displaymath}
|\Psi(gs,gs^{-1})|\leq 1=|g^{2}|,
\end{displaymath}
and the proof of the first statement is a straightforward application of Rouch\'{e}'s theorem. Moreover, for $s=1$, by applying Rouch\'e's theorem in equation $g=g^{-1}R(g,g)$ it is seen that $g(1)=1$ is a zero of multiplicity one provided that $E_{x}<0$, $E_{y}<0$.
\section{Proofs for the case $\Psi(0,0)=0$}
\subsection{Proof of Lemma \ref{lemm1}}\label{ap}
For $|y|=1$, $y\neq1$, the kernel equation $R(x,y)=0$, or equivalently $xy=\Psi(x,y)$ has exactly one root $x=X_{0}(y)$ such that $|X_{0}(y)|<1$. This is immediately proven by realizing that $|\Psi(x,y)|<1=|xy|$ and applying Rouch\'e's theorem. For $y=1$, $R(x,1)=0$ implies $(1-x)[x(p_{1,0}+p_{1,1}+p_{1,-1})-(p_{-1,1}+p_{-1,0})]$. Thus, in case $\gamma:=p_{1,0}+p_{1,1}+p_{1,-1}-p_{-1,1}-p_{-1,0}<0$, $X_{0}(1)=1$. Similarly, we can prove that $R(x,y)=0$ has exactly one root $y=Y_{0}(x)$, such that $|Y_{0}(x)|\leq1$, for $|x|=1$. For an alternative derivation see \cite[Lemma 5.3.1]{fay}.
\subsection{Proof of Lemma \ref{lem1}}\label{ap1}
We will prove the part related to $\mathcal{L}$. Similarly, we can also prove part 2. For $x\in[x_{1},x_{2}]$, $D_{x}(x)=\widehat{b}^{2}(x)-4\widehat{a}(x)\widehat{c}(x)$ is negative, so $X_{0}(y)$ and $X_{1}(y)$ are complex conjugates. Thus, $|Y(x)|^{2}=\frac{\widehat{c}(x)}{\widehat{a}(x)}=k(x)$. Note that,
\begin{equation}
\frac{d}{dx}k(x)=\frac{x^{2}(p_{0,1}p_{1,-1}-p_{1,1}p_{0,-1})+2p_{1,-1}p_{-1,1}x+p_{-1,1}p_{0,-1}}{\widehat{a}(x)^{2}},
\end{equation}
where\footnote{To improve the readability we set $p_{i,j}:=p_{i,j}(N_{1},N_{2})=p_{i,j}(n_{1},n_{2})$ for $(n_{1},n_{2})\in S_{3}$.} $p_{0,1}p_{1,-1}-p_{1,1}p_{0,-1}=\lambda_{1}^{2}\lambda_{2}\bar{\lambda}_{2}\bar{a}_{1}a_{2}a_{1}\bar{a}_{2}>0$, and thus, $k(x)$ is a non-negative function for $x\in(0,\infty)$, which in turn implies that $k(x)\leq k(x_{2})$.

We can further solve $|y(x)|^2 = \widehat{c}(x)/\widehat{a}(x)$ as a function of $x$, and denote the solution that lies within $[x_1,x_2]$ by $\tilde{x}(y)$, i.e.,
\begin{equation}
\begin{array}{l}
\tilde{x}(y)=\frac{p_{0,-1}-p_{0,1}|y|^{2}-\sqrt{(p_{0,1}|y|^{2}-p_{0,-1})^{2}-4p_{-1,1}|y|^{2}(p_{1,1}|y|^{2}-p_{1,-1})}}{2(p_{1,1}|y|^{2}-p_{1,-1})}.
\end{array}
\label{cxz}
\end{equation}
So $\tilde{x}(y)$ is in fact the one-valued inverse function of $y(x)$. For each $y\in\mathcal{L}$ it also follows that
\begin{equation}
\begin{array}{l}
Re(y(x))=\frac{-\widehat{b}(\tilde{x}(y))}{2\widehat{a}(\tilde{x}(y))}.
\end{array}
\label{rd1}
\end{equation} 
Solving (\ref{rd1}) as a function of $|y(x)|^{2}$ then gives an expression for $|y(x)|^{2}$ in terms of $Re(y)$. 
\section{On the induced Markov chains in subsection \ref{sta}}\label{apc}
For $Q_{1,m}>N_{1}$, the component $Q_{2,m}$ evolves as a one-dimensional RW with one step transition probabilities $w^{(2)}_{j}(n_{2})=P(Q_{2,m+1}=n_{2}+j|Q_{2,m}=n_{2})$, $j=0,\pm1$, for $n_{2}=0,1,\ldots,$ given by
\begin{displaymath}
\begin{array}{rl}
w^{(2)}_{1}(n_{2})=&p_{0,1}(N_{1},n_{2})+p_{1,1}(N_{1},n_{2})+p_{-1,1}(N_{1},n_{2})=\lambda_{2}(N_{1},n_{2})[1-\bar{a}_{1}(N_{1},n_{2})a_{2}(N_{1},n_{2})],\\
w^{(2)}_{-1}(n_{2})=&p_{1,-1}(N_{1},n_{2})+p_{0,-1}(N_{1},n_{2})=\bar{\lambda}_{2}(N_{1},n_{2})\bar{a}_{1}(N_{1},n_{2})a_{2}(N_{1},n_{2}),\\
w^{(2)}_{0}(n_{2})=&p_{1,0}(N_{1},n_{2})+p_{-1,1}(N_{1},n_{2})+p_{0,0}(N_{1},n_{2}).
\end{array}
\end{displaymath}
Note that for $n_{2}\geq N_{2}$, $w^{(2)}_{j}(n_{2}):=w^{(2)}_{j}$, $j=0,\pm1$. Recall $\psi:=(\psi_{0},\psi_{1},\ldots)$ its stationary distribution. Then, simple calculations yields
\begin{displaymath}
\begin{array}{rl}
\psi_{n_{2}}=&\psi_{0}\prod_{j=0}^{n_{2}-1}\frac{w^{(2)}_{1}(j)}{w^{(2)}_{-1}(j+1)},\,1\leq j\leq N_{2},\\
\psi_{n_{2}}=&\psi_{N_{2}}\left(\frac{w^{(2)}_{1}}{w^{(2)}_{-1}}\right)^{n_{2}-N_{2}},\,j\geq N_{2}+1,\\
\psi_{0}=&[1+\sum_{n_{2}=1}^{N_{2}-1}\prod_{j=0}^{n_{2}-1}\frac{w^{(2)}_{1}(j)}{w^{(2)}_{-1}(j+1)}+\frac{w^{(2)}_{-1}}{w^{(2)}_{-1}-w^{(2)}(1)}\prod_{j=0}^{N_{2}-1}\frac{w^{(2)}_{1}(j)}{w^{(2)}_{-1}(j+1)}]^{-1},
\end{array}
\end{displaymath} 
where $w^{(2)}_{-1}-w^{(2)}(1)=\bar{a}_{1}a_{2}-\lambda_{2}<0$. Similarly, for $Q_{2,m}>N_{2}$, the component $Q_{1,m}$ evolves as a one-dimensional RW. Its one-step transition probabilities and its stationary behavior is derived as above and further details are omitted.

\bibliographystyle{spmpsci}      
\bibliography{bibl}   

%
%

\end{document}